\documentclass[a4paper, 12pt]{article}

\usepackage[utf8]{inputenc}
\usepackage[english]{babel}
\usepackage[T1]{fontenc}
\usepackage[letterpaper, margin=3cm]{geometry}
\usepackage{needspace}
\usepackage{bbm}

\usepackage{libertine} 
\usepackage{inconsolata} 

\usepackage[official]{eurosym}
\usepackage{setspace}

\usepackage{stackengine}
\newcommand\fit[3][.3ex]{\stackengine{#1}{#3}{#2}{O}{c}{F}{T}{S}}
\newcommand\umlaut[1]{\fit[.05ex]{\scriptsize..}{#1}}

\usepackage{amsmath, amsthm, amssymb}
\usepackage{graphicx}
\usepackage{enumerate}
\usepackage{enumitem}
\usepackage{authblk}
\usepackage{thmtools}
\usepackage{thm-restate}
\usepackage[colorlinks=true, citecolor=red]{hyperref}
\usepackage[capitalize]{cleveref}
\usepackage{float}
\usepackage{amsfonts,mathtools,enumerate}
\usepackage[ruled,vlined]{algorithm2e}
\usepackage{url} 

\usepackage{tikz}
\usetikzlibrary{arrows,decorations,backgrounds,snakes,calc} 


\renewenvironment{abstract}
{\small\vspace{-1em}
\begin{center}
\bfseries\abstractname\vspace{-.5em}\vspace{0pt}
\end{center}
\list{}{
\setlength{\leftmargin}{0.6in}%
\setlength{\rightmargin}{\leftmargin}}%
\item\relax}
{\endlist}

\declaretheorem[name=Theorem, numberwithin=section]{theorem}
\declaretheorem[name=Lemma, sibling=theorem]{lemma}

\declaretheorem[name=Corollary, sibling=theorem]{corollary}

\declaretheorem[name=Problem, sibling=theorem]{problem}
\declaretheorem[name=Claim, sibling=theorem]{claim}

%

\def\cqedsymbol{\ifmmode$\lrcorner$\else{\unskip\nobreak\hfil
\penalty50\hskip1em\null\nobreak\hfil$\lrcorner$
\parfillskip=0pt\finalhyphendemerits=0\endgraf}\fi}


\interfootnotelinepenalty=10000

 %
\def\Pr{\mathbb{P}}

 %
 %
 %
\def\E{\mathcal{E}} 

\newcommand{\NN}{\mathbb{N}} 

 %

\tikzstyle{vertex}=[circle, draw, fill=black!50,
                        inner sep=0pt, minimum width=4pt]

\let\leq\leqslant
\let\geq\geqslant

\newcommand{\qu}{\textsc{Query}}
\DeclareMathOperator{\tl}{tl}

\thickmuskip=5mu plus 1mu minus 2mu

\title{Tight distance query reconstruction for trees and graphs without long induced cycles}

\author[1]{Paul Bastide}
\author[2]{Carla Groenland}
\affil[1]{LaBRI - Université de Bordeaux, \href{mailto:paul.bastide@ens-rennes.fr}{paul.bastide@ens-rennes.fr}}
\affil[2]{TU Delft, \href{mailto:c.e.groenland@tudelft.nl}{c.e.groenland@tudelft.nl}}

\newcommand{\n}{N} 
\newcommand{\const}{\frac1{200}}
\date{}
\begin{document}

\maketitle
\begin{abstract}
Given access to the vertex set $V$ of a connected graph $G=(V,E)$ and an oracle that given two vertices $u,v\in V$, returns the shortest path distance between $u$ and $v$, how many queries are needed to reconstruct $E$? 

Firstly, we show that randomised algorithms need to use at least $\frac1{200} \Delta n\log_\Delta n$ queries in expectation in order to reconstruct $n$-vertex trees of maximum degree $\Delta$. The best previous lower bound (for graphs of bounded maximum degree) was an information-theoretic lower bound of $\Omega(n\log n/\log \log n)$. Our randomised lower bound is also the first to break through the information-theoretic barrier for related query models including distance queries for phylogenetic trees, membership queries for learning partitions and path queries in directed trees.

Secondly, we provide a simple deterministic algorithm to reconstruct trees using $\Delta n\log_\Delta n+(\Delta+2)n$ distance queries. This proves that our lower bound is optimal up to a multiplicative constant. 
We extend our algorithm to reconstruct graphs without induced cycles of length at least $k$ using $O_{\Delta,k}(n\log n)$ queries. Our lower bound is therefore tight for a wide range of tree-like graphs, such as chordal graphs, permutation graphs and AT-free graphs. The previously best randomised algorithm for chordal graphs used $O_{\Delta}(n\log^2 n)$ queries in expectation, so we improve by a $(\log n)$-factor for this graph class.
\end{abstract}

\section{Introduction}

The distance query model has been introduced \cite{beerliova06}. In this model, only the vertex set $V$ of a hidden graph $G=(V,E)$ is known and the aim is to reconstruct the edge set $E$ via distance queries to an oracle. For a pair of vertices $(u,v) \in V^2$, the oracle answers the shortest path distance between $u$ and $v$ in $G$. The algorithm can select the next query based on the responses of earlier queries. If there is a unique graph consistent with the query responses, the graph has been reconstructed.

For a graph class $\mathcal{G}$ of connected graphs, we say an algorithm reconstructs the graphs in the class if for every graph $G\in \mathcal{G}$ the distance profile obtained from the queries is unique to $G$ within $\mathcal{G}$.  Let $f(G,A)$ denote the number of queries that a deterministic algorithm $A$ takes until it has reconstructed the graph $G\in \mathcal{G}$. The \textit{query complexity} of an algorithm $A$ is the maximum number of queries that $A$ takes on an input graph from $\mathcal{G}$, that is, $\max_{G\in \mathcal{G}}f(G,A)$. A randomised algorithm is a probability distribution $\pi$ over algorithms and its query complexity given by the expected number of queries $\max_{G\in \mathcal{G}}\mathbb{E}_{A\sim \pi}[f(G,A)]$. The (randomised/deterministic) query complexity for reconstructing a graph class is now given by the complexity of the best (randomised/deterministic) algorithm.

Of course, by asking the oracle the distance between every pair $(u,v)$ of vertices in $G$, we can completely reconstruct the edge set as $E = \{ \{u,v\} \mid d(u,v) = 1 \}$. 
This implies a trivial upper bound of $\binom{|V|}2$ on the query complexity. 
Unfortunately, this upper bound can be tight. For example, the clique $K_n$ is difficult to distinguish from $K_n$ minus an edge: if the missing edges is $\{u,v\}$, then the only query answer that is different is then one for the pair $(u,v)$. Thus, all pairs need to be queried before $K_n$ is reconstructed. The core of this problem is in fact high degree vertices \cite{reyzin2007learning} (see \cref{fig:stars}) and therefore we will  restrict our attention to connected $n$-vertex graphs of maximum degree $\Delta$, as has also been done in earlier work.

\begin{figure}[H]
    \centering
    \begin{tikzpicture}[thick,scale=0.6]
 \node[vertex] (center) at (0,0) {};
 \foreach \pos in {0,1,...,7} {
 \node[vertex] (\pos) at (45*\pos:1.5) {};
 \draw (\pos) -- (center);
 }
 \foreach \pos in {8,9,...,15} {
 \node[vertex] (\pos) at (45*\pos:2.5) { };
 }
 \draw (0) -- (8);
 \draw (1) -- (9);
 \draw (2) -- (10);
 \draw (3) -- (11);
 \draw (4) -- (12);
 \draw (5) -- (13);
 \draw (6) -- (14);
 \draw (7) -- (15);

\begin{scope}[shift={(9,0)}]
 \node[vertex] (center) at (0,0) {};
 \foreach \pos in {0,1,...,7} {
 \node[vertex] (\pos) at (45*\pos:1.5) {};
 \draw (\pos) -- (center);
 }
 \foreach \pos in {8,9,...,15} {
 \node[vertex] (\pos) at (45*\pos:2.5) { };
 }
 \draw (0) -- (8);
 \draw (0) -- (9);
 \draw (2) -- (10);
 \draw (3) -- (11);
 \draw (4) -- (12);
 \draw (5) -- (13);
 \draw (6) -- (14);
 \draw (7) -- (15);

\end{scope}

\end{tikzpicture}
    \caption{In order to distinguish the tree on the left from all possible labellings of the tree on the right, $\Omega(n^2)$ queries are needed. }
    \label{fig:stars}
\end{figure}

The best known lower bound (for bounded degree graphs) is from \cite{KannanMZ14}: by an information-theoretic argument, $\Omega_{\Delta}(n \log n/\log\log n)$ queries are needed to reconstruct $n$-vertex trees of maximum degree $\Delta$. Let us consider $\Delta = 10$ to sketch the idea of the proof. We use that the class $\mathcal{C}$ of $n$-vertex graphs of maximum degree $10$ and diameter at most $\log n$ has size $\Omega(2^{n\log n})$. For any algorithm distinguishing graphs from $\mathcal{C}$ in $N$ queries, no two distinct graphs from $\mathcal{C}$ can get the same responses to the first $N$ queries. The diameter condition, ensures that (for graphs in  $\mathcal{C}$) every query answer is an element of $\{0,\dots,\lfloor \log n\rfloor \}$ and therefore can be encoded using at most $\log(\log n+1)$ bits. Concatenating the answers to the first $N$ queries, and using that the number of possible strings needs to be at least the number of graphs in the class, we find $2^{N \log(\log n+1)} = \Omega(2^{n\log n})$. This implies $N = \Omega(n \log n / \log \log n)$.

Improving on such an information-theoretic lower bound is often difficult.
More generally, randomised query complexity is infamously difficult to pinpoint: for example, state-of-the-art results are also far from tight bounds for the recursive majority function \cite{goos2016composition,Leonardos,MagniezrecursiveMajority}. In the setting of the evasiveness conjecture, the oracle can answer \textit{adjacency queries} (`given $u,v$, is $\{u,v\}\in E$?') instead of distance queries. It has been shown already in 1975 \cite{rivest1975generalization}
that for any fixed non-trivial monotone graph property of the graph (such as `does $G$ have a triangle') any deterministic algorithm needs $\Theta(n^2)$ on $n$-vertex graphs. At the same time, the best randomised query lower bound  $\Omega(n^{4/3}(\log n)^{1/3})$ from \cite{chakrabarti2001improved}  is far from the best upper bound of $O(n^2)$. Even seemingly simple questions such as estimating the average degree of a graph using vertex degree queries requires new probabilistic tools to achieve tight bounds \cite{feige2004sums,goldreich2008approximating}.  

For $n$-vertex trees of maximum degree $\Delta$, we achieve the correct dependency on $n$ and $\Delta$ for both the randomised and deterministic query complexity using distance queries. Our first result towards this is the following lower bound.
\begin{restatable}{theorem}{lowerboundnlogn}
\label{thm:lowerboundnlogn}
Let $\Delta \geq 2$ and $n=2c \Delta^k$ be integers, where $c\in [1,\Delta)$ and $k\geq 50(c\ln c + 3)$ is an integer. Any randomised algorithm requires at least $\frac1{50} \Delta n\log_{\Delta} n$ queries to reconstruct $n$-vertex trees of maximum degree $\Delta+1$.
\end{restatable}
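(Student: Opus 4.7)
By Yao's minimax principle, it suffices to construct a distribution $\mathcal{D}$ on $n$-vertex trees of maximum degree $\Delta+1$ on which every \emph{deterministic} distance-query algorithm needs at least $\tfrac{1}{50}\Delta n\log_\Delta n$ queries in expectation. The hard distribution I would take is as follows: fix a balanced $\Delta$-ary tree $T^*$ on exactly $n$ vertices of depth $\Theta(k)$ (with a partial last level if required; since $n = 2c\Delta^k$ with $c < \Delta$ this is possible within max degree $\Delta+1$), and sample $T\sim\mathcal{D}$ as $T^*$ relabeled by a uniformly random bijection $\sigma:V\to V(T^*)$. The oracle then returns $d_{T^*}(\sigma(u),\sigma(v))$, so the algorithm's task reduces to identifying $\sigma$ up to an automorphism of $T^*$.

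\textbf{Adversary counting (main step).} Fix a deterministic algorithm $A$ and at every point of its execution let $\mathcal{L}$ be the set of labelings consistent with the transcript; termination demands $|\mathcal{L}| \le |\operatorname{Aut}(T^*)|$. The technical core should be a \emph{distance-concentration} input: when $\sigma$ is drawn uniformly from $\mathcal{L}$, the distance $d_{T^*}(\sigma(u),\sigma(v))$ is highly concentrated on the diameter of $T^*$, with plurality fraction at least $1 - O(1/\Delta)$. This holds initially because two uniformly random positions in a complete $\Delta$-ary tree have their LCA at the root with probability $(\Delta-1)/\Delta$, and I would propagate it throughout by maintaining the invariant that $\mathcal{L}$ is always the preimage of an orbit under a large automorphism subgroup acting on the still-unresolved vertices. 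An adversary choosing plurality answers then forces $|\mathcal{L}| \geq n!(1-O(1/\Delta))^q$ after $q$ queries, and termination requires $q \gtrsim \Delta\log(n!/|\operatorname{Aut}(T^*)|)$. A Stirling-type computation, using $|\operatorname{Aut}(T^*)| \approx (\Delta!)^{n/\Delta}$ and with the remainders controlled by the hypothesis $k\ge 50(c\ln c+3)$, gives $\log(n!/|\operatorname{Aut}(T^*)|) \geq \Omega(n\log_\Delta n)$, yielding the claimed $\tfrac{1}{50}\Delta n\log_\Delta n$ lower bound.

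\textbf{Main obstacle.} The delicate point is sustaining the plurality fraction $1-O(1/\Delta)$ \emph{after} conditioning on previous answers: once $A$ has pinned down some vertex positions the symmetry of $T^*$ is broken and the posterior distance distribution need not remain so skewed. I anticipate the argument tracks, at each internal vertex of $T^*$, the partition of ``descendant labels'' into still-interchangeable groups under the current $\mathcal{L}$, and charges each query by the number of groups it splits. The explicit bound $k \ge 50(c\ln c+3)$ should enter at exactly this stage, ensuring $T^*$ is deep enough that the per-query symmetry loss is absorbed by the global counting target; making this bookkeeping precise, so that the plurality estimate and the Stirling computation combine cleanly to produce the constant $\tfrac{1}{50}$, is where I expect the bulk of the technical work to lie.
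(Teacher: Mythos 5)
Your plan diverges substantially from the paper's proof and contains a gap that cannot be repaired in the form you describe.

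\textbf{Where the paper goes differently.} The paper does not randomize a bijection on all of $V$; it uses the tree $T_{c,\Delta,k}$ with \emph{fixed} labels on all internal nodes, and only the leaf labels are the unknown permutation. This is what makes the reduction clean: distance queries to internal nodes become ``longest common prefix'' queries on the leaf's address in $[\Delta]^k$, and the problem reduces to reconstructing a function $f:[n]\to[\Delta]^k$ from a coordinate oracle. The paper then counts the number of \textbf{NO} answers, not the raw query count, and — crucially — first analyses the case where $f$ is a \emph{uniformly random function} (not a bijection), where the values $f(a)_i$ are genuinely independent across $(a,i)$ and the posterior after $t$ queries stays uniform on the surviving set $J_{a,i}^t$. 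Only at the end does it pass from ``random function'' to ``balanced function'' by a counting argument. Your choice of a uniformly random \emph{bijection} $\sigma$ reinstates exactly the dependencies between entries that the paper explicitly engineered away; this is what makes your ``main obstacle'' paragraph unresolvable rather than merely technical.

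\textbf{Why the plurality argument cannot hold as stated.} You correctly flag that sustaining plurality $1-O(1/\Delta)$ after conditioning is the crux, but this is not a bookkeeping nuisance: it is false, and in a way that contradicts the paper's own upper bound. If plurality $1-1/\Delta$ were maintained for every query, the adversary counting $|\mathcal L|\geq n!\,(1-1/\Delta)^q$ versus the termination condition $|\mathcal L|\leq|\operatorname{Aut}(T^*)|\approx(\Delta!)^{n/\Delta}$ forces
\[
q \;\gtrsim\; \Delta\,\log\!\left(\frac{n!}{|\operatorname{Aut}(T^*)|}\right)\;\approx\;\Delta\, n\,\log\!\left(\frac{n}{\Delta}\right),
\]
whereas the paper's deterministic algorithm reconstructs trees in $\Delta n\log_\Delta n + O(\Delta n)$ queries. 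For $\Delta$ growing with $n$ (which the theorem permits, e.g.\ $\Delta\approx n^{1/150}$), one has $\log(n/\Delta)\gg\log_\Delta n$, so your bound would exceed the known upper bound by roughly a $\log\Delta$ factor. This means plurality \emph{must} drop well below $1-1/\Delta$ once enough structure is revealed; concretely, after the algorithm has narrowed $\sigma(v)$ to a small candidate set, a query against a nearby pinned-down vertex can split the candidates close to evenly. The paper's $\Delta n k=\Delta n\log_\Delta n$ lower bound arises instead from per-coordinate bookkeeping: a ``special'' query (one touching a coordinate with $|J_{a,i}^t|\geq\Delta/2$) is answered \textbf{YES} with probability at most $2/\Delta$, and resolving each of the $nk$ coordinates requires either one \textbf{YES} or $\Delta/2$ \textbf{NO}s. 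That accounting correctly captures that queries become \emph{more} informative as the ambiguity shrinks, which is exactly the phenomenon your global $|\mathcal L|$-counting misses. You would need to replace the plurality invariant with some per-coordinate or per-subtree potential to get the right answer, and at that point you have essentially rediscovered the paper's coordinate-oracle reduction.
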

Note that for any $n\geq 2$, there is a unique $(c,k)\in [1,\Delta)\times \mathbb{Z}_{\geq 0}$ with  $n/2=c\Delta^k$ so the only assumption in our lower bound is that $n$ is sufficiently large compared to $\Delta$. Our result allows $\Delta$ to grow slowly with $n$ (e.g. $\Delta=O((\log n)^{\alpha})$ with $\alpha\in(0,1)$). Moreover, we allow $\Delta$ to be larger for specific values of $n$ (e.g. $O(n^{1/150})$ for $c=1$). We made no attempt to optimize the constant and slightly lowered the constant in the abstract in order to state the result for trees of maximum degree $\Delta$ instead of $\Delta+1$.

This removes the $(1/\log\log n)$ factor compared to the information-theoretic lower bound. Moreover, we achieve both the correct dependence on $n$ and the correct dependency on $\Delta$ (namely $\Delta/\log \Delta$) for the term in front of $n\log n$. This is shown by our following result.
\begin{restatable}{theorem}{treerec}
    \label{thm:tree_rec}
    For all $\Delta\geq 4$, there exists a deterministic algorithm to reconstruct trees of maximum degree at most $\Delta$ on $n$ vertices using $ \Delta n\log_\Delta n +(\Delta +2)n$ queries.
\end{restatable}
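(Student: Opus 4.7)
The plan is to use a recursive algorithm based on a balanced separator decomposition designed to achieve recursion depth $\log_\Delta n$ rather than $\log_2 n$. The key combinatorial fact is the classical tree separator theorem: in any tree on $m$ vertices of maximum degree $\Delta$, there is a set $S$ with $|S| \leq \Delta - 1$ such that every connected component of $T - S$ has at most $m/\Delta$ vertices. Such an $S$ may be obtained by iterating the centroid theorem $\lceil\log_2\Delta\rceil$ times: at each round, pick the centroid of every component still strictly larger than $m/\Delta$.

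The recursive algorithm, applied to a subtree $T'$ on vertex set $V$ of size $m$, identifies such a separator $S$ and queries $d(c,v)$ for every $c \in S$ and $v \in V \setminus S$. From these distances, the edges of $T'[S]$ and the edges connecting $S$ to its neighbors are read off (neighbors of $c \in S$ are exactly the vertices at distance $1$ from $c$), and each non-separator vertex is assigned to its component of $T' - S$ using elimination-style queries that leverage already-queried distances. With care (reusing centroid-finding queries across iterated rounds and inheriting distances from the parent call), the per-level query cost stays within $\Delta m$. The algorithm then recurses on each component of size at most $m/\Delta$.

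Writing $f(m)$ for the query complexity on a subtree of size $m$, this yields the recurrence
\[
f(m) \leq \Delta m + \sum_i f(m_i), \quad m_i \leq m/\Delta, \quad \sum_i m_i \leq m - 1.
\]
Assuming $f(m_i) \leq \Delta m_i \log_\Delta m_i + (\Delta+2) m_i$ by induction and using $\log_\Delta m_i \leq \log_\Delta m - 1$, the inductive step reduces to
\[
f(m) \leq \Delta m + \Delta (m-1)(\log_\Delta m - 1) + (\Delta+2)(m-1) \leq \Delta m \log_\Delta m + (\Delta+2) m,
\]
establishing the claimed bound; the slack $-\Delta\log_\Delta m - 2 \leq 0$ keeps the induction tight. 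The main obstacle is to implement the per-level work within the $\Delta m$ query budget: one must simultaneously locate the centroids of $S$ (which requires partial structural information about the tree), read off the edges incident to $S$, and partition the remaining vertices into their components, all while sharing a common pool of distance queries instead of paying for each task separately.
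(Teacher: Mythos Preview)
Your recurrence analysis is clean, but the algorithmic content behind the inequality $f(m) \leq \Delta m + \sum_i f(m_i)$ is missing, and this is where the proof actually lives. Two concrete problems:

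\textbf{Finding the separator in an unknown tree.} Your recursion is on subtrees $T'$ whose edge set you have not yet reconstructed. You need to locate (iterated) centroids of $T'$, but centroids are defined in terms of component sizes, and you cannot read off component sizes from distances to a single vertex: knowing $d(v,\cdot)$ tells you the BFS layers from $v$ but not which layer-$1$ vertex is the ancestor of which deeper vertex. Any scheme I can see for locating even one centroid of an unknown tree already spends $\Omega(m)$ queries, and you need up to $\Delta - 1$ of them plus the component-partitioning queries, all inside the same $\Delta m$ budget. Your last paragraph flags exactly this as ``the main obstacle'' and then stops; that is the gap. The paper sidesteps the issue entirely: it roots the tree, reconstructs layer by layer, and for each new vertex $v$ searches for its parent using centroids of the \emph{already reconstructed} subtree $T[L_{\le i-1}]$. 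Because that subtree is fully known, centroid selection costs zero queries, and all queries go toward steering the search. The $\log_\Delta n$ (rather than $\log_2 n$) depth is obtained not by a fatter separator but by ordering the neighbours of a single centroid by decreasing component size, so that $x$ queries always buy a size reduction of at least $x$.

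\textbf{The iterated-centroid bound.} The claim that $\lceil \log_2 \Delta\rceil$ rounds of centroid removal yield $|S|\le \Delta-1$ is false as stated. In the full binary tree on $15$ vertices ($\Delta=3$, threshold $m/\Delta=5$), the first centroid is the root (components of size $7,7$), and the second round removes both children (components of size $3,3,3,3$), for $|S|=3>\Delta-1$. A size-$2$ separator does exist here (the two children of the root), but iterated centroids do not find it; you would need a different construction, and you would still owe an explanation of how to locate it via queries.

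So the recurrence is fine once you have the per-level subroutine, but you have not supplied that subroutine, and the paper's argument works precisely because it avoids ever needing one.
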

For the class of trees of maximum degree $3$, the algorithm from Theorem \ref{thm:tree_rec} for maximum degree $\Delta=4$ is still optimal up to a multiplicative constant (by our Theorem \ref{thm:lowerboundnlogn} applied with $\Delta=3$). 

\cref{thm:lowerboundnlogn} and \cref{thm:tree_rec} show both the  deterministic and randomised distance query complexity of $n$-vertex trees of maximum degree $\Delta$ are $\Theta(\Delta n\log_\Delta n)$ for various ranges of $\Delta$. However, we expect that the randomised complexity will be slightly lower in terms of the multiplicative constant. Towards this, we also show that randomness can be exploited in our algorithm of Theorem \ref{thm:tree_rec} in order to use $\frac12\Delta n\log_\Delta n +(\Delta+1+\log_2n) n$ queries in expectation. 

\medskip

Our algorithm extends to chordal graphs and beyond.
A graph is called \textit{$k$-chordal} if it has no induced cycles of length at least $k+1$. This gives an extension of chordal graphs (which are $3$-chordal graphs). No (randomised nor deterministic) algorithms were previously known with $o(n^{3/2})$ query complexity for $k$-chordal $n$-vertex graphs for $k\geq 5$. 
\begin{restatable}{theorem}{kchordal}
    \label{thm:kchordal} 
    There exists a deterministic algorithm to reconstruct $n$-vertex $k$-chordal graphs of maximum degree at most $\Delta$ using $O_{\Delta,k}(n \log n)$ queries. 
\end{restatable}
Since permutation graphs and AT-free graphs are known to be $5$-chordal and $6$-chordal respectively (see \cite{corneil1997asteroidal,dourisboure2007tree}), our results pinpoint the (randomised and deterministic) query complexity for those graph classes to $\Theta_{\Delta}(n\log n)$.

\paragraph{Previous algorithms and new algorithmic insight}
Kannan, Mathieu, and Zhou \cite{KannanMZ14,mathieu2013graph} designed a randomised algorithm with query complexity $\Tilde{O}_{\Delta}(n^{3/2})$, where the subscript denotes the constant may depend on $\Delta$ and $\Tilde{O}(f(n))$ is a short-cut for $O(f(n)\operatorname{polylog}(n))$. 
In the same article, they give randomised algorithms for chordal graphs and outerplanar graphs with a quasi-linear query complexity $O_{\Delta}(n\log^3 n)$. Rong, Li, Yang, and Wang  \cite{rong21} improved the randomised query complexity for chordal graphs to  $O_{\Delta}(n\log^2 n)$. Their algorithm only requires a weaker type of oracle and applies to graphs without induced cycles of length at least 5. Our algorithm extends the class of graphs and shaves off a $\log n$-factor, achieving the best possible dependence on $n$ by our Theorem \ref{thm:lowerboundnlogn}. Low chordality has also been used in other works for designing efficient algorithms (e.g. routing schemes \cite{dourisboure2005compact}, computing maximal independent sets or maximal packings, etcetera \cite{gartland2021finding,pilipczuk2021quasi}).

Most known algorithms with quasi-linear query complexity in the distance oracle setting are randomised, with a recent work giving a linear deterministic algorithm for interval graphs from Rong, Li, Yang, and Wang  \cite{rong21} as a notable exception.

We provide a new approach for exploiting separators which also extends to various `tree-like' graphs.
Our algorithm restricted to the class of trees is surprisingly simple: we compute a Breadth First Search (BFS) tree starting from a vertex $v_0$ and then inductively reconstruct the tree up to layer $i$. For each vertex in layer $i+1$, we use balanced separators to `binary search' its parent in layer $i$. The algorithm and its analysis is given in \cref{sec:trees} and we extend it to  $k$-chordal in \cref{sec:kchordaldet} using structural graph theory insights.

\paragraph{Lower bound technique}
In order to prove our lower bounds, we first restrict our attention to reconstructing the labelling of the leaves of one particular tree. This reduces some `noise' and reduces the problem to its core.
We prove a lower bound on the number of queries needed to reconstruct this labelling, and so the lower bound holds for any graph class that contains (all labellings of) this particular tree.  

\begin{figure}[H]
    \centering
    \begin{tikzpicture}[level distance=0.75cm,
  level 1/.style={sibling distance=5.25cm},
  level 2/.style={sibling distance=1.75cm}]
  \node {$\varepsilon$}
    child {node {1}
      child {node {11}child {node {$?$} }
        }
      child {node {12} child {node {$?$} }
        }
      child {node {13}child {node {$?$} }
        }
    }
    child {node {2}
        child {node {21} 
            child {node {$?$} }
        }
        child {node {22} child {node {$?$} }
        }
        child {node {23} child {node {$?$} }
        }
    }
    child {node {3}
        child {node {31} child {node {$?$} }
        }
        child {node {32} child {node {$?$} }
        }
        child {node {33} child {node {$?$} }
        }
    };
\end{tikzpicture}
    \caption{An example of the tree $T_{\Delta,k}$ for $\Delta = 3$ and $k=2$ is depicted with labels. The `$?$'s denote that the labels of the leaves are what needs to be reconstructed.}
    \label{fig:treelowerbound2}
\end{figure}

Our tree of interest is $T_{\Delta,k}$ depicted in \cref{fig:treelowerbound2}. It is the rooted tree of depth $k+1$ where the root (labelled by the empty string) has degree $\Delta$, all vertices at layers $1,\dots,k-1$ have degree $\Delta+1$, and all vertices on layer $k$ have degree 2. Since the structure of the tree is fixed, what remains to be reconstructed is the labelling of the vertices. We moreover fix the labels of all internal nodes, where the label of $v$ incorporates information about the path from $v$ to the root. We prove the following key lemma.

\begin{restatable}{lemma}{tleaflabel}
    \label{lem:t_leaf_label}
    Let  $\Delta\geq 2$, $k\geq 150$ be positive integers. 
    Consider a labelling of the tree $T_{\Delta,k}$ with $\n=\Delta^k$ leaves, where only the labels of the leaves are unknown. Any randomised algorithm requires at least $\frac1{11}\Delta \n \log_\Delta N$  queries in expectation to reconstruct the labelling.
\end{restatable}
\cref{lem:t_leaf_label} above only applies to specific values of $N$ for readability purposes. An extended version is given in \cref{sec:lb} which is needed to obtain a lower bound of all values of $N$ in our applications of the lemma. 

What remains for reconstruction in the lemma is to assign each leaf to its parent (after that, all adjacencies are determined). So the problem above reduces naturally to the problem of reconstructing a bijection $f:[\Delta]^k\to [\Delta]^k$ which is the viewpoint we will take. (We use the short-cut $[m]=\{1,\dots,m\}$.)
To model this, we define two new reconstruction problems and oracles. We present next the simplest one, as we think it is natural and could be studied in its own right.  

The aim is to reconstruct a bijective function $f:[\Delta]^k\to [\Delta]^k$ (where $N=\Delta^k=\Theta(n)$). The reader should see $f$ as the function that maps the label of a leaf to the label of its parent in $T_{\Delta,k}$.
We show that reconstructing $f$ with distance queries is deeply link with reconstructing $f$ with the help of a new oracle, the \emph{coordinate oracle}. This oracle answers the following two types of coordinate queries, where we use $x_i$ to denote the $i^{\text{th}}$ coordinate of $x \in [\Delta]^k$: 
\begin{enumerate}
    \item[(1)] `is $f(a)_i=f(a')_i?$' for $a,a'\in [n]$ and $i\in [k]$, and 
    \item[(2)] `is $f(a)_i=j$?' for $a\in [n],~j\in [\Delta]$ and $i\in [k]$.
\end{enumerate}
Interestingly, instead of the usual number of queries, we can link the complexity of our original problem to the number of \textbf{NO} answers given by the coordinate oracle. The reduction goes via another function reconstruction problem with a more involved type of queries (see Section \ref{sec:lb}).

\paragraph{Applications to other settings} 
The coordinate oracle presented above is of independent interest. Using the key lemma or intermediate results in this new setting, we deduce improved randomised lower bounds for various related reconstruction problems randomised lower bounds for other query models (see Section \ref{subsec:phylogentic} for further details):
\begin{itemize}
    \item betweenness queries in graphs \cite{abrahamsenBodwinRotenbergStockel16} (also called separator queries \cite{TreesSeparatorQueriesJagadishSen}), 
    \item path queries in directed graphs \cite{paralleltreesAfsharea20,AfsharESA20,Wang19},
    \item membership queries for learning a partition \cite{KingZhangZhou,Liu22,scottNIPS}, 
        \item leaf-distance queries in phylogenetic trees \cite{brodal01,hein1989optimal,Kannan96,KaoLingasOstlin99,KingZhangZhou},
    \item comparison queries in tree posets \cite{Roychoudhury23}.
\end{itemize}
Previous work in these settings found deterministic lower bounds or used information-theoretic arguments to obtain weaker randomised lower bounds. Our lower bound often matches the complexity of randomised algorithms in these settings, thereby settling those randomised query complexities (up to a multiplicative constant) as well.

\paragraph{Roadmap} In \cref{sec:prel}, we set up our notation and give the relevant definitions. In  \cref{sec:determinist}, we present our deterministic algorithms for trees and $k$-chordal graphs, obtaining new upper bounds. In \cref{sec:lb}, we prove a matching randomised lower bound and discuss further applications of this lower bound.
In \cref{sec:concl} we conclude with some open problems.

\section{Preliminaries}
\label{sec:prel}
In this paper, all graphs are simple, undirected and connected except when stated otherwise. All logarithms in this paper are base 2, unless mentioned otherwise, where $\ln=\log_e$. For $a \leq b$ two integers, let $[a,b]$ denote the set of all integers $x$ satisfying $a \leq x \leq b$. We short-cut $[a]=[1,a]$.

For a graph $G$ and two vertices $a,b \in V(G)$, we denote by $d_G(a,b)$ the length of a shortest path between $a$ and $b$. For $G = (V,E)$, $A \subseteq V$ and $i \in \NN$, we denote by $N^{\leq i}_G[A] = \{v \in V \mid  \exists a \in A, d_G(v,a) \leq i\}$. We may omit the superscript when $i=1$. We write $N_G(A)=N_G[A]\setminus A$ and use the shortcuts $N_G[u],N_G(u)$ for  $N_G[\{u\}],N_G(\{u\})$ when $u$ is a single vertex.  We may omit the subscript when the graph is clear from the context. 

\paragraph{Distance queries}
We denote by $\qu_G(u,v)$ the call to an oracle that answers $d_G(u,v)$, the distance between $u$ and $v$ in a graph $G$. For two sets $A,B$ of vertices, we denote by $\qu_G(A,B)$ the $|A|\cdot|B|$ calls to an oracle, answering the list of distances $d_G(a,b)$ for all $a \in A$ and all $b \in B$. We may abuse notation and write $\qu_G(u,A)$ for $\qu_G(\{u\},A)$ and may omit $G$ when the graph is clear from the context.

For a graph class $\mathcal{G}$ of connected graphs, we say an algorithm reconstructs the graphs in the class if for every graph $G\in \mathcal{G}$ the distance profile obtained from the queries does not belongs to any other graph from $\mathcal{G}$. The \textit{query complexity} is the maximum number of queries that the algorithm takes on an input graph from $\mathcal{G}$, where the queries are adaptive. For a randomised algorithm, the query complexity is given by the expected number of queries (with respect to the randomness in the algorithm).

\paragraph{Tree decomposition}
A \textit{tree decomposition} of a graph $G$, introduced by  \cite{robertson1986graph}, is a tuple $(T,(B_t)_{t \in V(T)})$ where $T$ is a tree and $B_t$ is a subset of $V(G)$ for every $t \in V(T)$, for which the following conditions hold.
    \begin{itemize}
        \item For every $v \in V(G)$, the set of $t \in V(T)$ such that $v \in B_t$, is non-empty and induces a subtree of $T$.
        \item For every $uv \in E(G)$,
        there exists a $t \in V(T)$ such that $\{u,v\} \subseteq B_t$.
    \end{itemize}

\paragraph{Balanced separators} For $\beta\in (0,1)$, a \textit{$\beta$-balanced separator} of a graph $G = (V,E)$ for a vertex set $A\subseteq V$ is a set $S$ of vertices such that the connected components of $G[A \setminus S]$ are of size at most $\beta |A|$. 

One nice property of tree decompositions is that they yield $\frac12$-balanced separators.
\begin{lemma}[\cite{robertson1986graph}]
    \label{lem:td_sep}
    Let $G$ be a graph, $A\subseteq V(G)$ and $(T,(B_t)_{t \in V(T)})$ a tree decomposition of $G$. Then there exists $u \in V(T)$ such that the bag $B_u$ is a $\frac12$-balanced separator of $A$ in $G$. 
\end{lemma}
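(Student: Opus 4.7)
The plan is the standard tree-orientation argument. Call a node $t\in V(T)$ \emph{good} if $B_t$ is a $\tfrac12$-balanced separator of $A$. I want to exhibit a good $t$, and I will do so by contradiction, using only that a tree on $|V(T)|$ nodes has exactly $|V(T)|-1$ edges.

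First I set up the structural observation about bags. For $t\in V(T)$ and a neighbor $t'$, let $T_{t'}$ denote the component of $T-t$ containing $t'$ and put $V_{t'}=\bigcup_{s\in V(T_{t'})}B_s$. Using the subtree axiom, if $v\in B_t\cap V_{t'}$ then the subtree of $T$ indexing the bags containing $v$ meets both sides of the edge $tt'$, so it must also contain $t'$; hence $B_t\cap V_{t'}\subseteq B_{t'}$. Applied to two distinct neighbors $t'_1\neq t'_2$ of $t$, this shows that $V_{t'_1}\setminus B_t$ and $V_{t'_2}\setminus B_t$ are disjoint, and moreover that no edge of $G$ joins them (any $G$-edge lives inside some common bag, and by the same argument that common bag sits in only one of the branches of $T-t$). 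Consequently every connected component of $G[A\setminus B_t]$ is contained in a single set $V_{t'}\setminus B_t$.

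Now I introduce the weight $\alpha(t,t'):=|A\cap(V_{t'}\setminus B_t)|$ for each ordered pair of adjacent nodes. By the previous paragraph, $t$ is good as soon as $\alpha(t,t')\le |A|/2$ for every neighbor $t'$. Assume toward contradiction that no $t$ is good: then every $t\in V(T)$ admits a neighbor $t'$ with $\alpha(t,t')>|A|/2$, and I orient the edge $tt'$ from $t$ towards (one such) $t'$. For any single edge $tt'$ the sets $A\cap(V_{t'}\setminus B_t)$ and $A\cap(V_t\setminus B_{t'})$ are disjoint subsets of $A$, so at most one of them can exceed $|A|/2$; thus each edge is oriented in at most one direction and the definition is consistent. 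The orientation then gives every vertex of $T$ out-degree at least $1$, forcing $|E(T)|\ge |V(T)|$, which contradicts $|E(T)|=|V(T)|-1$. Hence a good $t$ exists.

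The only mildly delicate step is the structural observation in the second paragraph (the inclusion $B_t\cap V_{t'}\subseteq B_{t'}$ together with the absence of $G$-edges across different branches): this is the usual payoff of the subtree axiom, and once it is in hand the rest is pigeonhole on the tree.
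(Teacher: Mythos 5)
The paper states this lemma with a citation to Robertson and Seymour and gives no proof of its own, so there is no internal argument to compare against. Your proof is a correct rendition of the standard edge-orientation argument: the consequences you draw from the subtree axiom (the inclusion $B_t\cap V_{t'}\subseteq B_{t'}$, the disjointness of $V_{t'_1}\setminus B_t$ and $V_{t'_2}\setminus B_t$, and the absence of $G$-edges between distinct branches) are exactly what is needed to trap every component of $G[A\setminus B_t]$ inside a single branch, and the pigeonhole on $T$ — each node emits an out-edge, no edge is oriented both ways since $A\cap(V_{t'}\setminus B_t)$ and $A\cap(V_t\setminus B_{t'})$ are disjoint subsets of $A$, hence $|E(T)|\geq|V(T)|>|V(T)|-1$ — closes it cleanly.
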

In fact,  for trees (of treewidth $1$), we can always find a $\frac12$-balanced separator of size $1$ (that is, a single vertex). We refer to such a separator as a \textit{vertex separator} or \textit{separating vertex}.

\section{Improved algorithm for distance reconstruction}
\label{sec:determinist}
This section presents our deterministic algorithms for the class of trees and $k$-chordal graphs. The lower bounds are given in \cref{sec:lb}.

\subsection{Optimal algorithm for tree reconstruction}
\label{sec:trees}

We first describe our algorithm for reconstructing trees as it encapsulates most of the algorithmic ideas, while being fairly simple.

\treerec*

\begin{proof}
    Let $T$ be a tree on $n$ vertices, and let $\Delta$ be the maximum degree of $T$. Our algorithm starts as follows. We  pick an arbitrary vertex $v_0\in V(T)$ and will consider (for the analysis) the input tree $T$ as rooted in $v_0$. We call $\qu(v_0,V(T))$. We define the $i$\textsuperscript{th} layer of $T$ as $L_i = \{v \in V(T) \mid d(v_0,v) = i \}$. We proceed to reconstruct the graph induced by the first $i$ layers by induction on $i$. 
    
    Note that $T[L_0] = (\{v_0\},\emptyset)$ is immediately reconstructed. 
    We fix an integer $i \geq 1$ and assume that the first $i-1$ layers are fully reconstructed (i.e we discovered all the edges and non-edges of $T[L_0 \cup \cdots \cup L_{i-1}]$). Let $T' = T[L_0 \cup \cdots \cup L_{i-1}]$ be the already reconstructed subtree. We show how to reconstruct the edges between the $(i-1)$\textsuperscript{th} layer and the $i$\textsuperscript{th} layer. Note that this  suffices to reconstructs all the edges (since in a tree, edges can only be between consecutive layers).
    
    Choose an arbitrary vertex $v \in L_i$. We first show that we can find the parent of $v$ in $L_{i-1}$ using $O(\Delta \log n)$ queries and then describe how to shave off an additional $(\log \Delta)$ factor.
    
    The procedure goes as follows. Let $T_1=T'$. As $T$ is a tree, it admits a $\frac12$-balanced separator of size 1. Let $s_1$ be a vertex for which $\{s_1\}$ forms such a separator. We ask first $\qu(v,N[s_1])$, where the neighbourhood is taken in $T_1$. As $T$ is a tree, there is a unique path between any two vertices. So for $w \in N(s_1)$, the distance $d(v,w)= d(v,s_1)-1$ if $w$ lies on the shortest path from $v$ to $s_1$ and $d(v,s_1)+1$ otherwise. From this, we can infer the neighbour $x$ of $s_1$ that is the closest to $v$ as the one for which the answer is smallest (or find that $s_1$ is adjacent to $v$ and finish). Moreover, the unique path from $s_1$ to $v$ lives in the connected component $T_2$ of $T_1 \setminus \{s_1\}$ that contains $x$. In particular, 
    $T_2$ contains the parent of $v$ (see \cref{fig:trees}).
    
    \begin{figure}[ht]
        \centering
        \begin{tikzpicture}[scale=0.55]
    \node[vertex] [label=above:$v_0$] (u0) at (0,0) { };
    \draw (u0) -- (3,-6) -- ++(-6,0) -- (u0);
    \node[vertex] [label=left:$x$] (x) at (-1,-4) { };
    \draw (x) -- (-2,-6) -- ++(2,0) -- (x);
    \begin{scope}[shift={(-1,-4)}]
        \node[vertex] (s) [label={[label distance=-3.5]0:$s_1$}] at (45:0.5) { };
        \draw (s) -- (x);
    \end{scope}
    \begin{scope}[shift={(s)}]
        \node[vertex] (a) at (90:0.5) { };
        \node[vertex] (b) at (-45:0.5) { };
        \draw (a) -- (s);
        \draw (b) -- (s);
        \node (a) at (0.65,1.65) {$T'$};
    \end{scope}
     \node at (-1,-5.25) {$T_2$};
     \node[vertex] [label=left:$v$] (v) at (-1.5,-6.5) { };
     \draw (v) -- (-1.25,-6);
\end{tikzpicture}
         \caption{The subtree $T_2$ contains the neighbour of $v$ on a shortest path to $s_1$ and so contains the parent of $v$.}
        \label{fig:trees}
    \end{figure}
    
    We can repeat this process and construct two sequences $(T_j)_{j\in \mathbb{N}}$ and $(s_j)_{j\in \mathbb{N}}$, where $T_{j+1}$ is the connected component of $T_j \setminus \{s_j\}$ containing the parent of $v$ and $s_j\in V(T_j)$ is chosen so that $\{s_j\}$  is a $\frac12$-balanced separator of $T_j$. 
    Once $T_\ell$ contains less than $\Delta +1$ vertices for some $\ell$ or the vertex $s_\ell$ is identified as the parent of $v$, we finish the process\footnote{If desired, we may define $T_{j}=T_\ell$ and $s_j=s_\ell$ for all $j\geq \ell$.}.
    By definition of $\frac12$-balanced separator,
    \[
    \forall j \in [\ell-1], |T_{j+1}| \leq |T_{j}|/2 \text{\quad and thus \quad} \ell \leq \log n. 
    \]
    If the process finished because $T_\ell$ has at most $\Delta$ vertices, we use at most $\Delta$ additional queries via $\qu(T_\ell,v)$. We infer the parent of $v$ from the result. 
    For each $j\leq \ell$, we use at most $\Delta+1$ queries to reconstruct $T_{j+1}$ from $T_j$. Hence we use $O(\Delta \log n)$ queries in total.
    
    Taking a closer look at the process, at any step $j$, we can choose the order on the queries $\qu(v,w)$ for $w\in N(s_j)$ and may not need to perform all the queries. 
    Given a subtree $S$ of $T'$ on $b\geq 1$ vertices that contains the parent of $v$, we now show how to find the parent of $v$ in $f(b)=\Delta \log_\Delta b+\Delta+1$ queries (giving the desired improvement of a $(\log \Delta)$ factor). If $S$ has at most $\Delta+1$ vertices, we may simply $\qu(v,S)$ and deduce the answer. Otherwise, let $s\in S$ be a $\frac12$-balanced separator for $S$. This has at least two neighbours since $S$ has at least $\Delta+1$ vertices. We order the connected components of $S \setminus \{s\}$ by non-increasing size, and ask the queries in the same order:  we start with $\qu(v,w_1)$ for $w_1$ the neighbour of $s$ which is in the largest component, then proceed to the neighbour of the second largest component etcetera. We terminate when we find two different distances or queried all the neighbours. In particular, we never perform $\qu(v,s)$.
    \begin{itemize}
        \item If $d(v,w)$ for $w\in N(s)$ are all the same then $s$ is the parent of $v$. We terminate and recognise $s$ as the parent of $v$. We used at most $\Delta\leq f(b)$ queries.
        \item If we discover that $d(v,w)< d(v,w')$ for some $w,w'\in N(s)$, then $s$ is not the parent of $v$. In fact, $w$ is the vertex from $N[s]$ closest to $v$ and we recursively perform the same procedure to the subtree $S'$ of $S\setminus\{s\}$ that contains $w$. Note that $S'$ must contain the parent of $v$.
    \end{itemize}    
    If we query $2$ neighbours of $s$ before detecting the component containing the parent of $v$, our next subtree $S'$ satisfies $|S'| \leq |S|/2$ since $\{s\}$ is a $\frac12$-balanced separator. If we query $m\geq 3$ neighbours of $s$ before detecting the component containing the parent of $v$, our next subtree $S'$ satisfies $|S'| \leq |S|/m$ since there are $m-1$ components of $S\setminus\{s\}$ that are at least as large. Either way, we decrease the size of the tree by a factor at least $x$ if we perform $x$ queries, where $x\in \{2,\dots,\Delta\}$. 

    We show by induction on $b$ that the procedure described above uses at most $f(b) = \Delta \log_\Delta b+\Delta+1$ queries. 
The claim is true when $b\leq \Delta+1$. By the discussion above, for $b\geq \Delta+2$, the process either finishes in $\Delta$ queries or uses $x+f(b')$ queries for some $b'\leq b/x$ and $x\in \{2,\dots,\Delta\}$. It thus suffices to show that 
    \[
    f(b/x)+x\leq f(b) \text{ for all $x\in \{2,\dots,\Delta\}$}.
    \]
    By definition, $f(b)-f(b/x)=\Delta \log_\Delta x$. We show that $\Delta \log_\Delta x\geq x$ for all $x\in [2,\Delta]$. By analysing the derivative of $\Delta \log_\Delta x -x$ on the (real) interval $x \in [2,\Delta]$, we find that the minimum is achieved at $x=2$ or $x=\Delta$. Since $\Delta\geq 4$, the minimum is achieved at $x=\Delta$, as desired. 
    
    With the improved procedure, we can reconstruct the edge from $L_{j-1}$ to $v$ in at most $\Delta \log_\Delta n+\Delta+1$ queries. Repeating the same strategy to reconstruct the parent of every vertex, we obtain the edge set of $T$ in at most
    \[
    (n-1)+(n-1)(\Delta \log_\Delta n +(\Delta +1))\leq      \Delta n\log_\Delta n +(\Delta +2)n
    \]
    queries.
\end{proof}
Even though we show in the next section that we cannot achieve a better dependency in $(n,\Delta)$ using randomisation, we can improve the multiplicative constant.
\begin{theorem}
    \label{thm:rand_alg}
    For any $\Delta \geq 4$, there exists a randomised algorithm for reconstructing $n$-vertex graphs of maximum degree at most  $\Delta$ using $(\frac12\Delta/\log(\Delta)+1)\log n+(\Delta+2) n$ queries in expectation.
\end{theorem}
\begin{proof}
The algorithm works similarly to the algorithm from \cref{thm:tree_rec}.
We define the same layers and inductively reconstruct the graph induced on the first $i$ layers. We find the parent of a vertex $v\in L_i$ via a similar sequence of separators $s_1,\dots,s_j$ and trees $T_1\supseteq T_2 \supseteq \dots\supseteq T_j$.
The key difference is that when we wish to learn the vertex in $N[s_j]$ closest to $v$, then we perform $\qu(v,w)$ for $w\in N[s_j]$ in an order that is chosen at random.
Suppose that $|T_j|=b$.

\begin{claim}
Let $T$ be a tree and let $t\in V(T)$.
Let $a_1,\dots,a_k$ be the sizes of the components of $T \setminus \{t\}$ and let $v_1,\dots,v_k$ denote the neighbours of $s_j$ in these components. There is a random order on $v_1,\dots,v_k$ such that the expected number of vertices placed before $v_i$ is at most $\tfrac12 \tfrac1{a_i} \sum_{j=1}^ka_j$ for all $i\in [k]$.
\end{claim}
\begin{proof}
We generate the order by  independently sampling $X_i\sim U[0,a_i]$ uniformly at random for all $i\in [k]$, where $[0,a_i]$ denotes the set of real numbers between $0$ and $a_i$. Almost surely, $X_{\pi(1)}>\dots>X_{\pi(k)}$ for some permutation $\pi$ on the support $[k]$ and this gives us our desired random order. 
\renewcommand{\E}{\mathbb{E}}
\renewcommand{\P}{\mathbb{P}}

We prove that the expected number of vertices placed before $v_1$ is at most $\frac12\frac{a_2+\dots+a_k}{a_1}$ and then the remaining cases will follow by symmetry.
Let $I(x_1)$ denote the number of vertices placed before $v_1$ given that $X_1=x_1$, i.e. the number of $i\in \{2,\dots,k\}$ such that $X_i>x_1$:
\[
I(x_1)=\sum_{i=2}^k \text{Bern}\left(\max\left(\frac{a_i-x_1}{a_i},0\right)\right).
\]
A Bernoulli random variable with probability $p$ has expectation $p$. 
The expected number of vertices placed before $v_1$ is hence
\[
\frac1{a_1}\int_0^{a_1} \E[I(x_1)]dx_1=\sum_{i=2}^k \frac1{a_1}\int_0^{\text{min}(a_1,a_i)}1-\frac{x_1}{a_i}dx_1.
\]
We now show for all $i\in [2,k]$ that the $i$th summand is at most $\frac12 \frac{a_i}{a_1}$, which implies that the number of vertices placed before $v_1$ is indeed at most $\frac12 \frac{a_2+\dots+a_k}{a_1}$. We compute
\[
\frac1{a_1}\int_0^{\text{min}(a_1,a_i)}1-\frac{x_1}{a_i}dx_1=
\frac{\min(a_1,a_i)}{a_1}\left(1 - \frac{\min(a_1,a_i)}{2a_i}\right).
\]
When $a_i\leq a_1$, the expression simplifies to $\frac{a_i}{a_1}\frac12$ as desired. 
When $a_i\geq a_1$, the expression simplifies to $1-\frac12 \frac{a_1}{a_i}$ 
which is at most $\frac12 \frac{a_i}{a_1}$ since 
\[
a_1a_i\leq \frac12 a_i^2+\frac12 a_1^2.\qedhere
\]
\end{proof}
By the claim, if the parent of $v$ is in a component $T_{j+1}$ of $T_j-s_j$ of size $a$, then we query at most $\frac12(b-a)/a$ vertices in expectation before we query the neighbour of $s_j$ in $T_{j+1}$. This means that for a size reduction of $x=b/a$, we perform approximately $\frac12x$ queries in expectation, compared to $x$ in our deterministic algorithm. 
Using linearity of expectation, we will repeat a similar calculation to the one done in the proof of \cref{thm:tree_rec}. 

We show that we reconstruct the edge to the parent of $v$ using at most $\frac12\Delta \log_\Delta n+\Delta+1+\log_2n$ queries in expectation. We show that once we have identified a subtree $T_j$ containing the parent of $v$ with $|T_j|=b$, we use at most $\Delta \log_\Delta b+\Delta+1+\log_2b$ queries in expectation to find the parent of $v$. Let $s_j$ be a $1/2$-balanced separator of $T_j$. The first base case is given when $s_j$ is the parent of $v$: in this case we perform at most $\Delta$ queries (all neighbours of $s_j$) and find that $s_j$ is the parent. The second base case is when $|T_j|=\Delta+1$, in which case we query all distances between $v$ and $T_j$ and identify the parent of $v$.

We now assume that $b\geq \Delta+2$ and that the parent of $v$ is in a component $T_{j+1}$ of $T_j\setminus s_j$. Let $a=|T_{j+1}|$. By the claim, we query at most $\frac12(b-a)/a$ vertices in expectation before we query the neighbour $w$ of $s_j$ in $T_{j+1}$. We still need to query the distance from $v$ to $w$. If $w$ is the first vertex to be queried, then we need to query one more vertex. Since $s_j$ is a $1/2$-balanced separator, this happens with probability at most $1/2$. So after at most $\frac12 \frac{b}a+1$ in expectation we find the neighbour of $v$ is in $T_{j+1}$ after which by induction we need another $\frac12\Delta \log_\Delta a+\Delta+1+\log_2a$ queries in expectation. In total we use at most
\[
\frac12 \frac{b}a+1+\frac12\Delta \log_\Delta a+\Delta+1+\log_2a
\]
in expectation. We find that $\frac{b}a+\Delta \log_\Delta a\leq \Delta\log_\Delta b$ for all $\Delta \geq 4$ (same calculation as before) and $\log_2a+1\leq \log_2b$ since $a\leq b/2$. So we used at most $\Delta \log_\Delta b+\Delta+1+\log_2b$ queries in expectation, as claimed.
\end{proof}

\subsection{Optimal algorithm for $k$-chordal graphs}
\label{sec:kchordaldet}

Using additional structural analysis, we extend our algorithm from trees to $k$-chordal graphs: graphs without induced cycles of length at least $k+1$. 
In the simpler case of (3-)chordal graphs, (randomised) reconstruction from a quasi-linear number of queries was already known to be possible. Besides extending the class of graphs, our algorithm shaves off a $(\log n)$ factor and is now optimal in $n$ (the number of vertices of the input graph). 

The core of the proof uses the same principles as for trees in \cref{thm:tree_rec}: we reconstruct the edges of a vertex $u$ to the previous layer, layer-by-layer and vertex-by-vertex. The two important ingredients are (1) a structural result on the neighbourhood of a vertex (see \cref{cl:kchordal_neighbour}) and  (2) the existence of `nice' balanced separators on the already reconstructed subgraph (see \cref{cl:quality_sep}). After removing the separator, we need to show that we can correctly determine the component that contains the neighbourhood of the vertex $u$ that we are currently considering. We also need to reconstruct the edges within the layer, but this turns out to be relatively easy.

\kchordal*

During the proof we will need the following definition. The \textit{treelength} of a graph $G$ (denoted $\tl(G)$) is the minimal integer $\ell$ for which there exists a tree decomposition $(T,(B_t)_{t\in V(T)})$ of $G$ such that $d_G(u,v) \leq \ell$ for every pair of vertices $u,v$ that share a bag (i.e. $u,v\in B_t$ for some $t\in V(T)$). In particular we will use the following result proved by Kosowski, Li, Nisse, and Suchan \cite{kosowski2015k}.

\begin{lemma}[\cite{kosowski2015k}]
    \label{lem:kchordaltreelengthk}
    For any $k \in \mathbb{N}$, any $k$-chordal graph has treelength at most $k$.
\end{lemma}

\begin{proof}[Proof of \cref{thm:kchordal}]
We start by fixing a vertex $v_0$ and asking $\qu(V(G),v_0)$. From that, we reconstruct $L_i = \{v \in V(G) \mid d(v,v_0) = i\}$. We write $L_{\bowtie i} = \cup_{j \bowtie i} L_j$ for any relation $\bowtie \, \in \{\leq,<,>,\geq\}$.

The algorithm proceeds by iteratively reconstructing  $G[L_{\leq i}]$ for increasing values of $i$. Note that we can reconstruct $L_{\leq 2\Delta k}$, the vertices at distance at most $2\Delta k$ from $v_0$, using $O_{k,\Delta}(n)$ queries. 

Suppose that we reconstructed $G_1 := G[L_{\leq i-1}]$ for some $i\geq 2\Delta k$ and we again want to reconstruct the two edge sets
\[E_{i-1,i} = \{uv \in E(G) \mid u \in L_{i-1}, \, v \in L_{i}\}\] and \[E_{i,i} = \{uv \in E(G) \mid u , v \in L_{i}\}.\]
We call $H_1=G[L_{\leq i-1-k}]$ the core of $G_1$.
We need a lemma that implies that neighbourhoods are not spread out too much in $G_1$.
\begin{claim}
\label{cl:kchordal_neighbour}
For all $u \in L_i$ and $v,w \in N(u) \cap L_{i-1}$, $d_{G_1}(v,w) \leq \Delta k$.
\end{claim}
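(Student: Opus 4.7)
The plan is to build a cycle through $u$ using a shortest $v$–$w$ path in $G_1$, and then exploit $k$-chordality together with the fact that $u$ has only $\Delta$ neighbours. Let $P = x_0 x_1 \dots x_\ell$ be a shortest path in $G_1$ between $v=x_0$ and $w=x_\ell$, so $\ell = d_{G_1}(v,w)$. Concatenating $P$ with the edges $uv$ and $uw$ gives a cycle $C$ in $G$ of length $\ell + 2$. If $C$ were induced, $k$-chordality would immediately yield $\ell + 2 \leq k$, which is far stronger than what we need.

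In general $C$ may have chords, but their structure is very restricted. Any chord between two vertices $x_i, x_j$ of $P$ would also be an edge of $G_1$ (since $G_1 = G[L_{\leq i-1}]$ is induced and $x_i, x_j \in V(G_1)$), and such an edge would shortcut $P$, contradicting the minimality of $\ell$. Hence every chord of $C$ must be incident to $u$, i.e. of the form $u x_j$ for some $0 < j < \ell$.

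Next I would list the indices $0 = i_0 < i_1 < \dots < i_r = \ell$ at which $u$ has a neighbour on $P$. Since $u$ has at most $\Delta$ neighbours in $G$, $r + 1 \leq \Delta$. For each consecutive pair, consider the cycle
\[
C_j \;=\; u,\, x_{i_j},\, x_{i_j+1},\, \dots,\, x_{i_{j+1}},\, u
\]
of length $i_{j+1} - i_j + 2$. This cycle is induced in $G$: a chord between two $x$-vertices would again contradict that $P$ is a shortest path in $G_1$, and $u$ has no neighbour among $x_{i_j+1}, \dots, x_{i_{j+1}-1}$ by the choice of $i_j, i_{j+1}$ as consecutive. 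By $k$-chordality, $|C_j| \leq k$, so $i_{j+1} - i_j \leq k-2$.

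Summing over the $r$ consecutive pairs,
\[
\ell \;=\; \sum_{j=0}^{r-1} (i_{j+1} - i_j) \;\leq\; r(k-2) \;\leq\; (\Delta-1)(k-2) \;\leq\; \Delta k,
\]
as desired. The main conceptual step is the observation that every chord of the $(\ell+2)$-cycle must go through $u$; once that is established, the degree bound on $u$ immediately limits $\ell$ to a linear function of $\Delta$ and $k$ via $k$-chordality.
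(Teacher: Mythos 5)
Your proof is correct and follows essentially the same approach as the paper: take a shortest $v$--$w$ path $P$ in $G_1$, observe that $P$ has no chords since it is a shortest path and $G_1$ is induced, split $P$ at the (at most $\Delta$) vertices adjacent to $u$, and use $k$-chordality on the resulting induced cycles through $u$ to bound each segment. Your version is actually a touch more careful in the arithmetic, obtaining the slightly sharper bound $(\Delta-1)(k-2)$ before relaxing to $\Delta k$.
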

\begin{proof}
Let $v,w \in N(u) \cap L_{i-1}$ and let $P$ be a shortest $vw$-path in $G_1$. If $V(P) \cap N(u) = \{v,w\}$, then the vertex set $V(P) \cup \{u\}$ induces a cycle in $G$, and so $|V(P)| \leq k$ (else the $k$-chordality would be contradicted). 
For the same reason, $P$ can have at most $k-1$ consecutive vertices outside of $N(u)$. Since $u$ has at most $\Delta$ neighbours, it follows that $d_{G_1}(v,w)\leq \Delta k$.
\end{proof}
Since $G$ has treelength at most $k$ by \cref{lem:kchordaltreelengthk}, it has a tree decomposition $(T',\mathcal{B}')$ such that all bags $B'\in \mathcal{B}'$ satisfy $d_{G}(u,v)\leq k$ for all $u,v\in B'$. In particular the bags have size at most $\Delta^k+1$. 

We have already reconstructed $G_1$, so in particular, we know $N^{\leq k}[v]$ for all $v\in H_1$.
Therefore, we can construct a tree decomposition $(T,\mathcal{B})$ of $G_1$ such that each $B \in \mathcal{B}$ has size at most $\Delta^k+1$ and for any bag $B \in \mathcal{B}$ that contains at least one vertex of the core $H_1$, we have $d_{G_1}(u,v)\leq k$ for all $u,v\in B$.

Fix $u\in L_i$. We describe an algorithm to reconstruct $N(u) \cap L_{i-1}$. The algorithm recursively constructs a sequence of connected graphs $(G_j)_{j=0}^\ell$ and a sequence of separators $(S_j)^\ell_{j=1}$ for some $\ell\leq \lceil \log(n) \rceil$, such that $S_j$ is a $\frac12$-balanced separator of $G_j$, $S_j\subseteq L_{\leq i-\Delta k - 1}$ and $N(u) \cap L_{i-1}\subseteq V(G_j)$. 

We first prove the following claim that we use to find our sequence of separators $(S_j)$. 
\begin{claim}
    \label{cl:quality_sep}
    For $n$ large enough compared to $\Delta$ and $k$ and any set of vertices $A \subseteq V(G_1)$ with $|A| \geq \log n$, there exists a bag $B$ of $T$ such that $B$ is a $\frac12$-balanced separator of $A$, and $B$ is contained in $L_{\leq i-\Delta k - 1}$.   
\end{claim}

\begin{proof}
    Let $T$ be rooted in a bag that contains $v_0$.
    By \cref{lem:td_sep}, there is a bag $B$ of $T$ that forms a $\frac12$-balanced separator for $A$ (i.e. all connected components of $G_1[A\setminus B]$ are of size at most $|A|/2$). We choose such a bag $B$ of minimum depth (in $T$). We need to show $B$ is contained in $L_{\leq i-\Delta k - 1}$.   
    
    If $B$ contains $v_0$, then $B \subseteq G[L_{\leq k}]$. Since $i\geq 2\Delta k$, we are done in this case.
    
    Suppose now that $v_0 \notin B$ and let $B'$ be the parent of $B$. By definition, $B'$ is not a $\frac12$-balanced separator of $A$.     
    If $B$ contains a vertex of $L_{\leq i-1-k}=V(H_1)$, then its diameter in $G_1$ is at most $k$. So either $B\subseteq L_{\leq i-\Delta k-1}$ or $B\subseteq L_{>i-(\Delta+1)k-1}$. We are done in the first case, so assume the latter. Since $G_1$ is connected, $B\cap B'\neq \emptyset$. The same diameter argument gives that $B'\subseteq L_{>i-(\Delta+2)k-1}$. If $C$ is a component of $G_1\setminus B'$ that does not contain $v_0$, then the shortest path (of length at most $i$) from any $v\in C$ must go through $B'$ (at distance at least $i-(\Delta+2)^k-1$ from $v_0$). In particular, all such components are contained in $N^{((\Delta+2)^k+1)}(B')$ and so the total size is at most $\Delta^{(\Delta+2)k + 1} |B'| =O_{k,\Delta}(1)$. For $n$ sufficiently large, this is at most $\frac12\log n$.

    On the other hand, as $B'$ is not a $\frac12$-balanced separator, there exist a component $A'$ of $G_1[A\setminus B']$ with $|A'| > \frac12 \log n$. We found above that $A'$ must contain $v_0$. Since $B$ does not contain $v_0$, $A'$ is contained in the component of $G_1[A \setminus B]$ containing $v_0$. This yields a contradiction with the fact that $B$ is a $\frac12$-balanced separator of $A$. 
\end{proof}

Suppose that we have defined $G_j$ for some $j \geq 1$ and let us describe how to define $G_{j+1}$. If $|G_j| \leq \log n$, we ask $\qu(u,V(G_j))$ and output $N(u)\cap V(G_j)$.
Otherwise, we let $S_j$ be the bag found in \cref{cl:quality_sep} when applied to $A=V(G_j)$. Then $S_j \subseteq L_{\leq i -\Delta k - 1}$  and it is a $\frac12$-balanced separator of $G_j$. Since it is a bag of $T$ and contained in $H_1$, we find the size of the bag is at most $\Delta^k+1$ and $d_{G_1}(u,v)\leq k$ for all $u,v\in S_j$. We ask $\qu(N[S_j],u)$ and let $G_{j+1}$ be a component of $G_j\setminus S_j$ that contains a vertex from $\arg\,\min_{x\in N[S_j]} d_G(x,u)$. 
Then, we increase $j$ by one and repeat the same procedure.

\vspace{0.4cm}

We now prove the correctness of the algorithm presented above.

We first argue that $N(u)\cap L_{i-1}$ is contained in a unique connected component of $G_j \setminus S_j$. Since every separator is included in $L_{\leq i-\Delta k-1}$, we find $d_{G_1}(u,S_\ell) \geq \Delta k + 1$ for all $\ell \leq j$. By \cref{cl:kchordal_neighbour}, all vertices in $N(u)\cap L_{i-1}$ are connected via paths in $G_1$ of length at most $\Delta k$ and by the observation above, these paths avoid all separators so will be present in a single connected component of $G_j\setminus S_j$.
The only thing left to prove is the following claim.
\begin{claim}
    \label{cl:kchordal_induction}
    $G_{j+1}$ is the component that contains  $N(u)\cap L_{i-1}$.
\end{claim}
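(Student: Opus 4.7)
The plan is to show that any minimizer $x^*\in\arg\min_{x\in N[S_j]}d_G(x,u)$ lies in $C^*$, so that the chosen component $G_{j+1}$ equals $C^*$. First observe that the minimum cannot be achieved on $S_j$ itself: for any $s\in S_j$, a neighbour of $s$ on a shortest $u$-$s$ path in $G$ lies in $N(S_j)\subseteq N[S_j]$ and is strictly closer to $u$. So fix a minimizer $x^*\in N(S_j)\setminus S_j$ and an adjacent $s^*\in S_j$ with $d_G(u,s^*)=d^*+1=\min_{s\in S_j}d_G(u,s)$, and let $P=u,p_1,\dots,p_t=x^*,p_{t+1}=s^*$ be a shortest $u$-$s^*$ path in $G$. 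By minimality of $d_G(u,S_j)$, all interior vertices $p_1,\dots,p_t$ must lie outside of $S_j$.

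The main goal is to exhibit a path inside $V(G_j)\setminus S_j$ from $x^*$ back to some $v\in N(u)\cap L_{i-1}\subseteq C^*$, which certifies $x^*\in C^*$. Since $u\in L_i$ and $s^*\in L_{\leq i-\Delta k-1}$, the path $P$ must cross $L_{i-1}$; let $a$ be the smallest index with $p_a\in L_{\leq i-1}$, so layer adjacency forces $p_a\in L_{i-1}$ and $p_{a-1}\in L_i$. In the favourable case $a=1$ we have $p_1\in N(u)\cap L_{i-1}\subseteq C^*$, and it then suffices to verify that the tail $p_1,\dots,p_t$ can be taken to lie in $V(G_j)\setminus S_j$. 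This amounts to checking two things: (i) there is a shortest $p_1$-$x^*$ path in $G$ contained in $L_{\leq i-1}$, hence inside $V(G_1)$, which is delicate whenever there is slack $d_G(u,s^*) > i-\ell(s^*)$; and (ii) this path avoids all earlier separators $S_1,\dots,S_{j-1}$, using that each $S_\ell$ is concentrated in $L_{\leq i-\Delta k-1}$ while Claim~\ref{cl:kchordal_neighbour} bounds the spread of $N(u)\cap L_{i-1}$ and forces any excursion into a deep separator to be matched by an equally long return, contradicting the shortest-path constraint.

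The non-favourable case $a>1$ is the main technical obstacle: the shortest $u$-$s^*$ path starts with an excursion into $L_i\cup L_{i+1}$ before descending into $L_{i-1}$. The plan is to invoke the $k$-chordal hypothesis to reroute. Fix any $v\in N(u)\cap L_{i-1}$ and consider the closed walk in $G$ obtained by concatenating the initial segment $u,p_1,\dots,p_a$ of $P$ with a short $G_1$-path from $p_a$ back to $v$ (whose existence and controlled length follow from Claim~\ref{cl:kchordal_neighbour} combined with the local structure of the tree decomposition $T$ near $S_j$), and the edge $vu$. Any induced cycle in this closed walk has length at most $k$ by $k$-chordality, and extracting such a cycle yields a chord that short-circuits the initial excursion of $P$, producing an alternate shortest $u$-$s^*$ path whose second vertex already lies in $N(u)\cap L_{i-1}$; this reduces the non-favourable case to the favourable one. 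The delicate point is to verify rigorously that the rerouted path stays inside $V(G_j)\setminus S_j$, which requires combining the tree-decomposition structure used to select $S_j$ (via Claim~\ref{cl:quality_sep}) with careful layer accounting along the closed walk.
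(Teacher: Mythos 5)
Your plan is structurally quite different from the paper's, and it has a genuine gap at its core. The paper proves the claim by contradiction: it assumes that $N(u)\cap L_{i-1}$ lies in some component $H$ of $G_j\setminus S_j$ other than $H_x$ (the component of a minimizer $x$), takes a shortest $x$-$u$ path $P$, a path $P'$ from $u$ to $S_j$ with interior in $H$, and a short path $P''$ inside $S_j$, and then contracts $P''\cup P'\cup(P\setminus\{x_1,\dots,x_k\})$ to a single vertex. The $k$ vertices $x_1,\dots,x_k$ just before the first $L_i$-vertex of $P$ lie in $H_x\cap L_{\geq i-k}$ and so cannot be adjacent to $P'\cup P''$, which forces the contraction to contain an induced cycle of length $k+1$ and contradicts $k$-chordality. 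Crucially, the hypothesis $H\neq H_x$ is what makes the excursion into $L_i$ unavoidable and keeps the $x_i$'s away from $H$.

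You instead attempt a direct construction: exhibit a path from $x^*$ back to some $v\in N(u)\cap L_{i-1}$ inside $V(G_j)\setminus S_j$. The first observation (the minimum of $d_G(\cdot,u)$ over $N[S_j]$ is never attained on $S_j$) is correct, but from there the argument does not close. In the favourable case $a=1$, you list conditions (i) and (ii), but you only state them as ``delicate'' without proving them, and (ii) omits the requirement that the path also avoid $S_j$ itself. That omission is not cosmetic: $S_j$ is by construction a separator of $G_j$, so a path from $p_1\in C^*$ to $x^*$ avoids $S_j$ \emph{if and only if} $x^*\in C^*$, which is precisely the statement being proved. As stated, the favourable case is essentially circular. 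Moreover the reasoning you give for (ii) (``any excursion into a deep separator must be matched by an equally long return'') does not apply, because the endpoint $x^*\in N(S_j)\subseteq L_{\leq i-\Delta k}$ is already deep, so the path descends permanently rather than making a round-trip. In the non-favourable case $a>1$, the rerouting via $k$-chordality is only sketched: ``extracting such a cycle yields a chord that short-circuits the initial excursion'' is not established, and it is unclear that the resulting rerouted path is again a shortest $u$-$s^*$ path, let alone that it can be iterated down to $a=1$. The paper's contradiction-based argument sidesteps all of this by never needing to certify membership of $x^*$ in the ``good'' component; it only needs to rule out the ``bad'' configuration, which is exactly what the induced-cycle construction does.
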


\begin{figure}
    \centering
    \begin{tikzpicture}[scale = 1.1]
\tikzstyle{vertex}=[circle, draw, fill=black!50,
                        inner sep=0pt, minimum width=4pt]
\draw (3,-4.5) ellipse (5cm and 1.15cm);
\draw (3,1) circle (1cm);
\draw (-1.6,0.7) rectangle (1.6,-3.25);
\begin{scope}[shift={(6,0)}]
    \draw (-1.6,0.7) rectangle (1.6,-3.25);
\end{scope}
\draw (-1.4,-2.75) -- (-1,-4.5) -- (-0.6,-2.75);

\begin{scope}[shift={(0,-2)}]
\draw (7,-2.5) -- (5.5,-0.5) -- (4,-2.5) -- (1,-2.5) -- (0,-0.5) -- (-1,-2.5);
\foreach \x/\y in {0/-0.5, 5.5/-0.5, 4/-2.5, 3/-2.5, 2/-2.5, 1/-2.5, -1/-2.5} {
\node[vertex] at (\x,\y) {};
}
\end{scope}

\draw[snake=coil, segment aspect=0, bend right] (5,0.4) -- (7,-4.5);

\draw (-1,-2.75) circle (0.4cm);
\node at (-1,-2.75) {\scriptsize{$N(u)$}}; 
\node at (3,-1) {$L_{< i-k}$}; 
\node at (3,-1.5) {$L_{\geq i-k}$}; 
\node at (-1.25,-4.75) {$u$}; 
\node at (0,1) {$H$};
\node at (3,2.25) {$S_j$};
\node at (6,1) {$H_x$};
\node at (3,-3.64) {$L_{\geq i}$};

\draw[snake=coil,segment aspect=0] (-1,-2.35) -- (1,0.4);
\node[vertex] (a) at (1,0.4) {};
\draw (a) -- (2.4,1);
\draw[snake=coil,segment aspect=0] (2.4,1) -- (3.6,1);
\draw (3.6,1) -- (5,0.4);

\node[vertex] [label=right:$y$] at (7, -4.5) {};
\node[vertex] [label=right:$x$] at (5,0.4) {};
\node[vertex] (xk) [label=right:$x_{k}$] at (5.95,-1.7) {};

\node[vertex] [label=right:$x_{1}$] at (6.43,-2.9) {};
\node[vertex] [label=right:$x_{2}$] at (6.32,-2.6) {};

\draw[dashed] (-2,-1.25) -- (8,-1.25);

\node at (3,0.6) {$P''$};
\node at (-0.2,-0.6) {$P'$};
\node at (5.9,-0.6) {$P$};

\end{tikzpicture}
    \caption{This figure depicts a possible configuration in the proof of \cref{cl:kchordal_induction} for which we end up with a contradiction by finding a large induced cycle.}.
    \label{fig:kchordal_fig}
\end{figure}

\begin{proof}
Let $x \in N[S_j]$ such that $d_G(x,u)$ is minimised and let $H_x$ be the connected component of $x$. We will prove that $H_x$ contains $N(u)\cap L_{i-1}$. In particular, this implies that $G_{j+1}=H_x$ (a priori, $G_{j+1}$ could be a different component for another minimiser than $x$), so that $G_{j+1}$ contains $N(u)\cap L_{i-1}$, as desired. 

Suppose towards a contradiction that $N(u) \cap L_{i-1}$ is instead contained in a different connected component $H$. We will find an induced cycle of length at least $k+1$. 

Let $P$ be a shortest path in $G$ from $x$ to $u$. 

Let $P'$ be a path from $u$ to $S_j$ with all internal vertices in $H$.  Such a path exists since $G_j$ is connected.

Let $P''$ be a shortest path in $G$ between a neighbour of $x$ in $S_j$ to the endpoint of $P'$ in $S_j$.
As $S_j$ is a bag contained in $H_1$, any two vertices in $S_j$ are within distance $k$ in $G_1$. So $P''\subseteq L_{\leq i-2k-2}$ (we may assume $\Delta\geq 4$). 

Let $y$ be the first vertex on the path $P$ (from $x$ to $u$) that lies in $L_{i}$ (such a vertex must exist since the path does not have internal vertices in $S_j$ by choice of $x$ and since $H_x$ contains no neighbours of $u$). 

Let $x_1,\dots,x_k$ be the $k$ vertices before $y$ in $P$. Note that none of the $x_i$ can be adjacent to or part of $P'\cup P''$ (since they are in $H_x\cap L_{\geq i-k}$). Let $G'$ be the graph obtained from $G[P\cup P''\cup P']$ by contracting  $P''\cup P'\cup (P\setminus \{x_{1},\dots,x_{k}\})$ to a single vertex $p$. Note that the selected vertex set is indeed connected and that the resulting graph has vertex set $\{x_{1},\dots,x_{k},p\}$. Since $P$ was a shortest path in $G$, the vertex set $\{x_{1},\dots,x_{k}\}$ still induces a path and it suffices to argue about the adjacencies of $p$. Via edges of $P$, the vertex $p$ is adjacent to $x_{1}$ and $x_{k}$. If $p$ was adjacent to $x_i$ for some $i\in [2,k-1]$, then there must be a vertex $y\in P''\cup P'\cup (P\setminus \{x_{1},\dots,x_{k}\})$ adjacent to $x_i$. But a case analysis shows this is not possible. (The only vertices adjacent to $x_i$ in $P$ are $x_{i+1}$ and $x_{i-1}$ since $P$ is a shortest path; we already argued that $y\not\in P'\cup P''$.) We hence found an induced cycle of length $k+1$, a contradiction.
\end{proof}
We now show how to reconstruct all edges in $E_{i,i}$ incident to $u$.
\begin{claim}
\label{cl:kchordal_2neighbour}
    If $x \in N(u)\cap L_{i-1}$ and $y\in N(v)\cap L_{i-1}$ for some $uv \in E_{i,i}$,  then $d_{G[L_{\leq i-1}]}(x,y) \leq 2\Delta k$.
\end{claim}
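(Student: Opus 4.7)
The plan is to mimic the proof of Claim \ref{cl:kchordal_neighbour}, this time using both $u$ and $v$ together to close short induced cycles. Let $P=p_0 p_1 \cdots p_L$ be a shortest path in $G[L_{\leq i-1}]$ from $p_0=x$ to $p_L=y$; the goal is to show $L\leq 2\Delta k$. Let $I=\{r\in\{0,\ldots,L\} : p_r\in N(u)\cup N(v)\}$. Since $x\in N(u)$ and $y\in N(v)$, we have $0,L\in I$, and clearly $|I|\leq |N(u)\cup N(v)|\leq 2\Delta$.

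The main step is to show that for any two consecutive indices $r<s$ in $I$, one has $s-r\leq k-1$. Suppose for contradiction that $s-r\geq k$. By definition of $I$, every internal vertex $p_t$ with $r<t<s$ lies outside $N(u)\cup N(v)$. Note also that $u,v\in L_i$ are not on $P$ at all. First, if $p_r$ and $p_s$ are both adjacent to the same vertex of $\{u,v\}$, say $u$, consider the cycle $C=p_r p_{r+1}\cdots p_s\, u\, p_r$ of length $s-r+2\geq k+2$. Any chord among $\{p_r,\dots,p_s\}$ would shorten $P$, and $u$ is not adjacent to any internal $p_t$, so $C$ is induced in $G$, contradicting $k$-chordality. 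Otherwise we may assume $p_r\in N(u)\setminus N(v)$ and $p_s\in N(v)\setminus N(u)$, and we consider $C'=p_r p_{r+1}\cdots p_s\, v\, u\, p_r$ of length $s-r+3\geq k+3$. By the same argument, together with $p_r\notin N(v)$ and $p_s\notin N(u)$, the only possible chord to check is $uv$, which is already an edge of $C'$. Hence $C'$ is induced, again contradicting $k$-chordality.

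Writing $I=\{i_1<i_2<\cdots<i_m\}$ with $i_1=0$ and $i_m=L$, this gives
\[
L=\sum_{j=1}^{m-1}(i_{j+1}-i_j)\leq (m-1)(k-1)\leq (2\Delta-1)(k-1)\leq 2\Delta k,
\]
as desired. The main subtlety, and the only place where the proof genuinely departs from that of Claim \ref{cl:kchordal_neighbour}, is the ``mixed'' case where the two endpoints of a segment are adjacent to different vertices of $\{u,v\}$: the edge $uv$ is needed to close an induced cycle, and this is exactly what forces the factor $2\Delta$ in the final bound rather than the factor $\Delta$ that appears in Claim \ref{cl:kchordal_neighbour}.
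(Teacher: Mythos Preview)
Your proof is correct and follows essentially the same approach as the paper: take a shortest $xy$-path in $G_1$, mark its hits on $N(u)\cup N(v)$, and bound each segment by closing it into an induced cycle through $u$, through $v$, or (in the mixed case) through the edge $uv$. Your case analysis is in fact written out more carefully than the paper's own sketch.
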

\begin{proof}
   Since $|N[\{u,v\}]| \leq 2\Delta$, it suffices to prove that $d_{G_1}(x,y)\leq k$ when the shortest path $P$ in $G_1$ between $x$ and $y$ avoids other vertices from $N[\{u,v\}]$. As we argued in \cref{cl:kchordal_neighbour} this is true when $x$ or $y$ is a neighbour of both $u$ and $v$ (else we create an induced cycle of length at least $k+1$). So we may assume that $x\in N(u)\setminus N(v)$ and $y\in N(v)\setminus N(u)$. But now $P \cup \{u,v\}$ is an induced cycle of length at least $k+1$.
\end{proof}
Let $G_1'$ be the graph obtained from $G_1$ by adding the vertices in $L_i$ and the edges in  $E_{i,i-1}$.  Our algorithm already reconstructed $G_1'$. If $uv\in E_{i,i}$, then applying \cref{cl:kchordal_2neighbour} to vertices $x,y\in L_{i-1}$ on the shortest paths from $u,v$ to the root $v_0$ respectively, we find that $d_{G_1'}(u,v)\leq 2\Delta k+2$.
For each $u\in L_i$, we ask $\qu(u,N_{G'_1}^{\leq 2\Delta k +2}(u) \cap L_i)$
 and we record the vertices $v$ for which the response is 1. Those are exactly the vertices adjacent to $u$.
Per vertex $u\in L_i$, this takes at most $\Delta^{2\Delta k +3}$ queries.

The query complexity of reconstructing $E_{i-1,i}$ is $O_{k,\Delta}(\log n|L_i|)$ as there are at most $\log n$ iterations (using the fact that the $(S_j)_j$ are $\frac12$-balanced separators) and in each iteration we do $O_{k,\Delta}(\log n)$ queries per vertex $u\in L_i$. In order to reconstruct $E_{i,i}$, we use $O_{k,\Delta}(1)$ queries per vertex of $L_i$. Therefore, the total query complexity of the algorithm is $\sum_i O_{k,\Delta}(|L_i|\log n) = O_{k,\Delta}(n\log n)$.
\end{proof}


\section{Lower bounds for randomised tree reconstruction}
\label{sec:lb}

In this section we show that the algorithm presented in the previous section is optimal in terms of the dependency on $n$ and $\Delta$, even when randomisation is allowed. 


\lowerboundnlogn*

Note that, for constant $c$, $\Delta$ could even be a small polynomial in $n$.
Any `algorithm' is allowed to be randomised unless specified to be deterministic. 

\subsection{Reconstructing functions from the coordinate oracle} 
\label{sec:coordoracle}
In order to prove the lower bound we reduce to a natural function reconstruction problem that could be of independent interest.  Let $\Delta \geq 3$, $k\geq 1$ and  $n=c\Delta^k$ be integers, where $c\in [1,\Delta)$.

Let $A = [n]$ and $B = [\Delta]^k$. Suppose that $f: A\to B$ is an unknown function that we want to reconstruct. For $b\in B$ and $1\leq i \leq k$, we write $b_i$ for the value of the $i$th coordinate of $b$.

The \textit{coordinate oracle} can answer the following two types of queries:
\begin{itemize}
    \item \textbf{Type 1}. $\qu_1^c(a,b,i)$ for $a \in A$, $b \in [\Delta]$ and $i\in [k]$ answers \textbf{YES} if $f(a)_{i} = b$ and \textbf{NO} otherwise.
    \item \textbf{Type 2}. $\qu_2^c(a,a',i)$ for $a,a' \in A$  and $i\in [k]$ answers \textbf{YES} if $f(a)_{i} = f(a')_i$ and \textbf{NO} otherwise.
\end{itemize}
In the case of the coordinate oracle, we will count the number of queries for which the answer is \textbf{NO} instead of the number of queries. 

We say that $f : A \to B$ is a \emph{balanced function} if for every $b\in B, \, |f^{-1}(b)| = c$ for some integer $c\geq 1$.  

Our main result on function reconstruction from a coordinate oracle is the following. 

\begin{restatable}{theorem}{lbcoord}
    \label{thm:lb_coord}
    Let $\Delta \geq 3$, $c\leq  \Delta-1$ and $k \geq 50(c\ln c +2)$ be positive integers and let $n=c\Delta^k$. Any algorithm reconstructing $f:[n] \to [\Delta]^k$ using the coordinate oracle, in the special case where $f$ is known to be a balanced function, has at least $\frac1{11} \Delta n k$ queries answered \textup{\textbf{NO}} in expectation. 
\end{restatable}

In order to prove \cref{thm:lb_coord}, we first study the  query complexity in the general case, when no restriction is put on $f$. Using Yao's minimax principle \cite{yao1977probabilistic}, studying the expected complexity of a randomised algorithm can be reduced to studying the query complexity of a deterministic algorithm on a randomised input.

\begin{lemma}[Corollary of Yao's minimax principle \cite{yao1977probabilistic}]
\label{lem:yao}
For any distribution $D$ on the inputs, for any randomised algorithm $M$, the expected query complexity of $M$ is at least the average query complexity of the best deterministic algorithm for input distribution $D$. 
\end{lemma}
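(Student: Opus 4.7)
The plan is to derive this as a direct application of the standard ``easy direction'' of Yao's minimax principle, which in turn reduces to two elementary inequalities: $\max \geq \text{average}$ and $\text{average} \geq \min$. The key conceptual ingredient is that any randomised algorithm can be viewed as a probability distribution over deterministic algorithms (its random tape induces, once fixed, a deterministic decision tree).

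First I would set up notation carefully. Let $q(A,x)$ denote the (deterministic) number of queries that algorithm $A$ uses on input $x$. For a randomised algorithm $M$ with internal randomness $r$, write $M_r$ for the deterministic algorithm obtained by fixing $r$. The quantity I need to lower bound is
\[
C(M) \;\eqdef\; \max_{x} \Ex_{r}\bigl[q(M_r, x)\bigr],
\]
which is the worst-case expected query complexity of $M$ (the standard notion of query complexity for a randomised algorithm as defined in the preliminaries).

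The second step is the chain of inequalities. For any distribution $D$ on inputs,
\[
\max_{x} \Ex_{r}\bigl[q(M_r,x)\bigr] \;\geq\; \Ex_{x \sim D}\Ex_{r}\bigl[q(M_r,x)\bigr],
\]
simply because a maximum is at least an average. By Fubini's theorem (the quantities are nonnegative) we may swap the expectations to obtain $\Ex_{r}\Ex_{x\sim D}[q(M_r,x)]$. Now for every fixed value of $r$, the inner expectation $\Ex_{x\sim D}[q(M_r,x)]$ is the average complexity of a particular deterministic algorithm on input distribution $D$, so it is at least $\min_{A} \Ex_{x\sim D}[q(A,x)]$, the average complexity of the best deterministic algorithm. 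Taking the outer expectation over $r$ preserves this bound, yielding the claim.

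There is no real obstacle here; the only thing requiring a bit of care is to check that the ``best deterministic algorithm'' minimum is attained (or, alternatively, to phrase the statement as an infimum). Since the deterministic algorithms solving a given reconstruction problem can be encoded as decision trees with responses from a finite alphabet, and since we only care about algorithms using at most some bounded number of queries (otherwise the inequality is trivial), the class of relevant deterministic algorithms is effectively finite and the minimum exists. This is the only subtlety; everything else is the standard two-line argument above.
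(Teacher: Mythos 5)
The paper does not prove this lemma itself; it states it as a known corollary of Yao's minimax principle with a citation to Yao's 1977 paper, so there is no in-paper proof to compare against. Your argument is the standard and correct ``easy direction'' of Yao's principle: view $M$ as a distribution over deterministic algorithms $M_r$, bound the worst-case expected cost $\max_x \Ex_r[q(M_r,x)]$ from below by the average over $D$, swap expectations by Fubini (cost is nonnegative), and lower-bound the inner average by the best deterministic algorithm's average cost for each fixed $r$. Your remark about the minimum being attained is a reasonable nicety; one could also just state the lemma with an infimum, and in the paper's application the set of deterministic zero-error algorithms using at most $n^2$ distance queries is finite, so the minimum exists. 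One thing worth flagging explicitly, which you use implicitly: the argument requires that every $M_r$ correctly reconstructs the input (i.e.\ $M$ is a zero-error algorithm), which is the setting of this paper; for bounded-error algorithms a Markov-type step would be needed before reducing to deterministic algorithms.
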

We will apply Yao's principle for $D$ the uniform distribution and the query complexity measuring the number of queries answered \textbf{NO}. We combine this with the following lemma.

\begin{restatable}{lemma}{concentrationcoord}
\label{lem:concentration_coord}
Let $n,k$ and $\Delta$ be integers.
    For any deterministic algorithm $R$ using the coordinate oracle and $f: [n] \rightarrow [\Delta]^k$ sampled u.a.r., the probability that $R$ reconstructs $f$ in at most $\frac1{10} \Delta n k$ queries answered \textbf{NO} is at most $e^{-\frac{1}{50} n k}$.
\end{restatable}

We first deduce our main theorem on function reconstruction from the two lemmas above.

\begin{proof}[Proof of \cref{thm:lb_coord}]
Let $M$ be a deterministic algorithm that  reconstructs balanced functions using the coordinate oracle. We first extend $M$ to an algorithm $\widetilde{M}$ that reconstructs all functions (among all functions) while the number of \textbf{NO} answers remains the same if the input is balanced. The algorithm $\widetilde{M}$ first performs the same queries as $M$ does, until it either has no balanced candidates or a single balanced candidate $f$ compatible with the answers so far. In the former case, it reconstructs the function by brute-force. In the second case, it performs $\qu_1^c(a,f(a)_i,i)$ for all $a\in A$ and $i\in [k]$ to verify that indeed the input is $f$. If the input is indeed $f$, we have now distinguished $f$ among all functions (rather than all balanced functions) without additional \textbf{NO} answers. If any of the queries answers \textbf{NO}, we again have no balanced candidates left and may perform the brute-force approach again.

We will show that, when restricted to balanced functions, $\widetilde{M}$ has an average query complexity (in terms of the number of \textbf{NO} answers) greater than $\frac{1}{11}\Delta n k$. Since $M$ has the same number of \textbf{NO} answers as $\widetilde{M}$ on balanced inputs, it has the same average query complexity as $\widetilde{M}$. 
Using Yao's principle (\cref{lem:yao}), it then follows that any randomised algorithm that reconstructs balanced functions has at least $\frac{1}{11}nk$ queries answered \textbf{NO} in expectation.  
 
By \cref{lem:concentration_coord}, there are at most $|B|^n e^{-\frac1{50}n k}$ functions $f:A\to B$ for which $\widetilde{M}$ reconstructs $f$ in less than $\frac1{10}\Delta n k$ queries. On the other hand the number of balanced function from $A$ to $B$ is the following multinomial coefficient ${n \choose c,\ldots,c} = \frac{n!}{(c!)^n}$. In particular, there are at least ${n \choose c,\ldots,c} - (n/c)^n e^{-\frac1{50}nk}$ balanced function for which $\widetilde{M}$ requires at least $\frac1{10}\Delta nk$ queries. This means that the average query complexity of $\widetilde{M}$ is at least
\[
\frac{{n \choose c,\ldots,c} -(n/c)^n e^{-\frac1{50}nk}}{{n \choose c,\ldots,c}}\frac1{10}\Delta n k = \frac{n! - (c!)^n (n/c)^n e^{-\frac1{50}nk}}{n!}\frac1{10}\Delta n k \geq \frac1{11}\Delta n k
\]
since,
\[
\displaystyle
(c!)^n (n/c)^ne^{-\frac1{50}nk} = \left(\frac{n}e\right)^n \left( \frac{ec!}c e^{-\frac1{50}k}\right)^n \leq n^n e^{-\frac{51}{50}n}\leq  \frac1{100}n!\]
using for the first inequality that $k \geq 50(c \ln c + 2)$ and for the second that $n\geq 2^{51}$.
\end{proof}

\begin{proof}[Proof of Lemma \ref{lem:concentration_coord}]
Let $R$ be a deterministic algorithm that uses the coordinate oracle to reconstruct functions.
Let $F_t$ denote the set of possible functions $f:A\to B$ that are consistent with the first $t$ queries done by $R$. (This depends on the input function $g:A\to B$, but we leave this implicit.) For $a\in A$ and $i\in [k]$, let
\[
J_{a,i}^t = \{j\in [\Delta] \mid f(a)_i=j \text{ for some }f\in F_t\}.
\]
Note that all values $j_1,j_2\in J_{a,i}^t$ are equally likely in the sense that there is an equal number of $f\in F_t$ with $f(a)_i=j_1$ as with $f(a)_i=j_2$.
The algorithm $R$ will perform the same $t$ queries for all $f\in F_t$. In particular, if $g:A\to B$ was chosen uniformly at random, then after the first $t$ queries all $f\in F_t$ are equally likely (as input function) and in particular $g(a)_i$ is uniformly distributed over $J_{a,i}^t$, independently of the sets $J_{a',i'}^t$ for $(a',i')\neq (a,i)$.
This is the part for which we crucially depend on the fact that we allow all functions $f:A\to B$ and not just bijections (where there may be dependencies between the probability distributions of $g(a)$ and $g(a')$ for distinct $a,a'\in A$). 

We say that the $t^\text{th}$ query of the algorithm is \emph{special} if 
    \begin{itemize}
        \item it is a Type 1 query $\qu_1^c(a,b,i)$ and  $|J_{a,i}^t| \geq \Delta/2$, or
        \item it is a Type 2 query $\qu_2^c(a,a',i)$ and either $|J_{a,i}^t|$ or $|J_{a',i}^t|$ is at least $\Delta/2$.
    \end{itemize} 
    Let $T$ denote the number of  \textbf{NO} answers to \emph{special} queries that $R$ does to the coordinate oracle until it has reconstructed the input function.
    We let $Y_i=1$ if the answer of the $i^\text{th}$ special query is \textbf{YES} and $0$ otherwise. So $\sum_{i=1}^T Y_i$ denotes the number of special queries with answer \textbf{YES}.
    
    At the start of the algorithm $J_{a,i}^0 = [\Delta]$ for all $a\in A$ and $i \in [k]$. Thus, to reconstruct the function, the pair $(a,i)$ is either (1) involved in a special query with answer is \textbf{YES} or (2) involved in $\Delta/2$ special queries for which the answer is \textbf{NO}. 
Since any query involves at most two elements of $A$, we deduce that 
    \[
    |A|k/2 = nk/2 \leq 
    \left(T - \sum_{i=1}^{T} Y_i \right) \frac{2}{\Delta} + \sum_{i=1}^{T} Y_i.
    \]
    We aim to prove that if $g:A\to B$ is sampled uniformly at random, then with high probability $T=T(g) \geq \frac{1}{10}\Delta nk$. In order to do so, we consider a simplified process and a random variable $\tau$ which is stochastically dominated by $T$ (i.e. for any $x \in \mathbb{R}^+$, $\Pr(T \leq x) \leq \Pr(\tau \leq x)$). Let us consider an infinite sequence of i.i.d. random variables $X_1,X_2,X_3,\dots  \sim \text{Bernouilli}(2/\Delta)$. Note that
    $$
    H(t)=\left( t - \sum_{i=1}^t X_i \right) \frac{2}{\Delta} + \sum_{i=1}^t X_i = \left(1- \frac2{\Delta} \right) \sum_{i=1}^t X_i + \frac{2t}{\Delta}
    $$
    is increasing in $t$. Let $\tau$ be the first integer $t$ for which $H(t)\geq \frac1{2}n \log_\Delta n$.
    
    If $g$ is sampled uniformly at random then the $j$th special query 
    (say involving $a\in A$ and $i\in [k]$ with $|J^t_{a,i}|\geq\Delta/2$)
    has answer \textbf{YES} with probability
    \[
    \Pr(Y_i = 1) \leq \frac{2}\Delta = \Pr(X_i = 1).
    \]
    This is because all values of $J^t_{a,i}$ are equally likely for $g(a)_i$ (and independent of the value of $g(a')_i$ or $b_i$ for $b\in B$ and $a'\in A$). This inequality holds independently of the values of $(Y_1,\dots,Y_{i-1})$. This implies that, for any $t \in \mathbb{N}^+$ and any $x \in \mathbb{R}^+$, 
    $$
    \Pr\left(\sum_{i=1}^t X_i \leq x\right) \leq \Pr\left(\sum_{i=1}^t Y_i \leq x\right).
    $$
    Therefore,
    $$
    \Pr\left(\left(1- \frac2{\Delta}\right) \sum_{i=1}^t X_i + \frac{2t}{\Delta} \leq x\right) \leq \Pr\left(\left(1- \frac2{\Delta}\right) \sum_{i=1}^t Y_i + \frac{2t}{\Delta} \leq x\right).
    $$
    From this we can conclude that $\Pr(T \leq x) \leq \Pr(\tau \leq x)$, thus $T$  stochastically dominates $\tau$.

    If $\tau \leq \frac1{10}\Delta nk$, then using the definition of $\tau$ we find that
    \begin{align*}
        \left( \frac1{10}\Delta nk - \sum_{i=1}^\tau  X_i \right) \frac{2}{\Delta} + \sum_{i=1}^\tau X_i &\geq \frac12 nk\\
        \intertext{which implies}
          \sum_{i=1}^\tau X_i\geq \left(1 - \frac{2}{\Delta}\right) \sum_{i=1}^\tau X_i & \geq \frac3{10}n k. 
    \end{align*}
Let $x =  \frac1{10}\Delta nk$. We compute $\mathbb{E}\left[\sum_{i=1}^x X_i\right] = \frac{2}{\Delta}x = \frac15 nk$. Using Chernoff's inequality (see e.g. \cite{mitzenmacher2017probability}) we find
\[
 \Pr(\tau \leq x) \leq \Pr\left(\sum_{i=0}^{x} X_i \geq \left(1+\frac12 \right) \frac15 nk\right) \leq \exp\left(-\left(\frac12\right)^2\frac15 n k /\left(2+\frac12\right)\right).
\]  
Since $\frac12 \frac12\frac15 \frac2{5}=\frac1{50}$, this proves $\Pr(T \leq x) \leq \Pr(\tau \leq x)\leq e^{-\frac1{50}n k}$. In particular, the probability that at most $\frac1{10}\Delta k$ queries are used is at most $e^{-\frac1{50}n k}$, as desired.
\end{proof}

\subsection{Reconstructing functions from the word oracle}
\label{subsec:word_oracle}

Let once again $A =[n]$ and $B = [\Delta]^k$. We next turn our attention to reconstructing functions $f:A\to B$ from a more complicated oracle that we use as a stepping stone to get to distance queries in trees.
For $b\in B$, we write $b_{[i,j]}=(b_i,b_{i+1},\dots,b_j)$. It will also be convenient to define $b_{\emptyset}$ as the empty string. 
The \textbf{word oracle}  can answer the following two types of questions.
\begin{itemize}[leftmargin=*]
    \item \textbf{Type 1}. $\qu_1^w(a,b)$ for $a \in A$ and $b \in B$, answers the largest $i\in [0,k]$ with $f(a)_{[1,i]}= b_{[1,i]}$.
    \item \textbf{Type 2}. $\qu_2^w(a,a')$ for $a,a' \in A$, answers the largest $i\in [0,k]$ with $f(a)_{[1,i]}= f(a')_{[1,i]}$.
\end{itemize}
By studying the number of queries for the word oracle and the number of \textbf{NO} answers for the component oracle, we can link the two reconstruction problems as follows.

\begin{restatable}{lemma}{coordword}
    \label{lem:coord_word}
    For all positive integers $\Delta,k$ and $n$, for any algorithm $M$ using the word oracle that reconstructs functions $f:A \rightarrow B$ in at most  $q(f)$ queries in expectation, there exists an algorithm $M'$ using the coordinate oracle that reconstructs functions $f:[n] \to [\Delta]^k$ such that at most $q(f)$ queries are answered \textbf{NO} in expectation.
\end{restatable}

\begin{proof}
    Given an algorithm $M$ using the word oracle, we build a new algorithm $M'$ using the coordinate oracle. We do so query-by-query. If $M$ asks $\qu_1^w(a,b)$, then $M'$ performs a sequence of queries $\qu_1^c(a,b_1,1),\qu_1^c(a,b_2,2),\ldots,\qu_1^c(a,b_{i+1},i+1)$, where $i\in [0,k-1]$ is the largest for which $f(a)_{[1,i]} = b_{[1,i]}$. Note that the sequence indeed simulates a query of the word oracle yet  the coordinate oracle answers \textbf{NO} at most once (on the $(i+1)$th query). 
    
   Queries of Type $2$ can be converted analogously.  This way, for every input $f$, the natural `coupling' of the randomness in $M$ and $M'$ ensures that the number of \textbf{NO} answers to $M'$ is stochastically dominated by the number of queries to $M$. In particular, the expected number of \textbf{NO} answers given by $M'$ is upper bounded by $q(f)$, the expected number of queries to~$M$.
\end{proof}
\cref{lem:coord_word,thm:lb_coord} now give the following result.

\begin{corollary}
    \label{lem:lb_word}
    Let $\Delta \geq 3$, $c\leq \Delta-1$ and $k \geq 50(c\ln c +2)$ be positive integers. Let $n = c\Delta^k$. Any algorithm reconstructing $f:[n] \to [\Delta]^k$ using the word oracle, in the special case where $f$ is known to be balanced, needs at least $\frac1{11}\Delta n k$ queries in expectation.
\end{corollary}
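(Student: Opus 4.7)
The plan is to derive this corollary as an essentially immediate consequence of the two results just established, namely \cref{lem:coord_word} (simulation of the word oracle by the coordinate oracle) and \cref{thm:lb_coord} (lower bound on the expected number of \textbf{NO} answers for coordinate-oracle reconstruction of balanced functions). So the proof will be short, but one has to make sure the parameter conditions are compatible and that the simulation step is applied in the right direction.

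First I would suppose for contradiction that there is a randomised algorithm $M$ using the word oracle that reconstructs every balanced $f:[n]\to[\Delta]^k$ in expectation at most $q(f)<\frac{1}{11}\Delta n k$ queries. Note that the hypotheses on $\Delta,c,k,n$ in the corollary are exactly those of \cref{thm:lb_coord}, so no reparameterisation is required. Before invoking \cref{lem:coord_word}, I would briefly observe that the statement of that lemma is about reconstructing arbitrary functions, but inspection of its proof shows that the simulation is query-by-query and input-by-input, so if $M$ happens to work correctly on the balanced class only, then the derived coordinate-oracle algorithm $M'$ also correctly reconstructs balanced inputs (the algorithm $M$ need not be aware of the balancedness assumption; $M'$ simply inherits whatever domain of correctness $M$ has).

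Next, I would apply \cref{lem:coord_word} to $M$ to obtain a randomised algorithm $M'$ using the coordinate oracle that reconstructs balanced functions such that, for each balanced $f$, the expected number of \textbf{NO} answers it receives is at most $q(f)$. Taking expectations also over the distribution on $f$ used in \cref{thm:lb_coord}, the expected number of \textbf{NO} answers of $M'$ on balanced inputs is strictly less than $\frac{1}{11}\Delta n k$, contradicting \cref{thm:lb_coord}. This yields the claimed lower bound of $\frac{1}{11}\Delta n k$ on the expected number of word-oracle queries.

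The only subtle point, and the one I would be most careful about, is the direction of the simulation: \cref{lem:coord_word} converts word-oracle algorithms into coordinate-oracle algorithms while controlling \textbf{NO} answers, which is exactly what we need to push a lower bound on \textbf{NO} answers (for the coordinate oracle) upward into a lower bound on total queries (for the word oracle). No converse simulation is needed. Otherwise the argument is a one-line chaining of the two preceding results.
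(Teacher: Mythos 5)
Your proposal is correct and follows exactly the route the paper intends: the corollary is stated as an immediate consequence of \cref{lem:coord_word} and \cref{thm:lb_coord}, obtained by simulating any word-oracle algorithm with a coordinate-oracle algorithm whose \textbf{NO} answers are dominated by the word-oracle query count, and then invoking the lower bound on \textbf{NO} answers. Your observation that the query-by-query simulation in \cref{lem:coord_word} preserves whatever domain of correctness $M$ has (so it applies to algorithms that are only promised to work on balanced inputs) is exactly the small point one needs to check, and you get it right.
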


\subsection{Reducing tree reconstruction to function reconstruction}
\label{subsec:reduction_to_function}
\begin{figure}
    \centering
    \begin{tikzpicture}[level distance=1cm,
  level 1/.style={sibling distance=5cm},
  level 2/.style={sibling distance=1.5cm}]
  \node {$\varepsilon$}
    child {node {1}
      child {node {11} child {node {$a_1$} edge from parent}}
      child {node {12} child {node {$a_2$} edge from parent}}
      child {node {13} child {node {$a_5$} edge from parent}}
    }
    child {node {2}
        child {node {21} child {node {$a_3$} edge from parent}}
        child {node {22} child {node {$a_7$} edge from parent}}
        child {node {23} child {node {$a_4$} edge from parent}}
    }
    child {node {3}
        child {node {31} child {node {$a_9$} edge from parent}}
        child {node {32} child {node {$a_6$} edge from parent}}
        child {node {33} child {node {$a_8$} edge from parent}}
    };
\end{tikzpicture}
    \caption{Example of the tree $T_{c,\Delta,k}$ constructed in the proof of \cref{thm:lowerboundnlogn} for $\Delta = 4$, $c=1$ and $k=2$ with the labelling $\ell$ of the internal nodes.}
    \label{fig:treelowerbound}
\end{figure}

In order to prove \cref{thm:lowerboundnlogn} we consider a specific tree $T_{c,\Delta,k}$ (with $c \leq \Delta$): the tree of depth $k+1$ where each node at depth at most $k-1$ has exactly $\Delta$ children and each node at depth $k$ has exactly $c$ children (see \cref{fig:treelowerbound}). \cref{thm:lowerboundnlogn} is an almost direct consequence of the following lemma. 

\tleaflabel*

\begin{proof}
We consider $T = T_{c,\Delta,k}$ and let $L$ be the set of leaves of $T$ and let $P$ be the set parents of the leaves. The tree $T$ has $n=\sum_{i=0}^k\Delta^i + c\Delta^k$ nodes and $\n =c\Delta^k$ leaves. Let $p:L\to P$ be the bijection that sends each leaf to its direct parent. 
We label internal nodes as follows. The root is labelled $\emptyset$ (the empty string) and if a node $v$ has label 
$\ell$ and has $\Delta$ children, then we order the children $1,\dots,\Delta$ and we label the child $i$ with label obtained from concatenation $\ell+(i)$. We put such labels on all internal nodes.

Let $I$ denote the set of internal nodes and let $\ell(v)$ denote the label of $v\in I$.
Let $f:L\to [\Delta]^k$ be the bijection that sends a leaf $u\in L$ to the label $\ell(p(u))\in [\Delta]^k$ of its direct parent. 

We consider the trees which have a fixed labelling (as described above) for node in $I$, and every possible permutation of the labelling of the leafs. 
All possible bijections $f:L\to [\Delta]^k$ appear among the trees that we are considering. To reconstruct the tree, we in particular recover the corresponding bijection $f$. Distance queries between internal vertices always give the same response and can be ignored. We show the other queries are Type 1 and Type 2 queries in disguise.
\begin{itemize}
    \item  For $a \in L$ and $b\in I$ the distance between $a$ and $b$ is given as follows. Let $z\in I$ be the nearest common ancestor of $a$ and $b$ and say $z$ has depth $i$ and $b$ has depth $j$. The distance between $a$ and $b$ is $1+(k-i)+(j-i)$. The values of $k$ and $j$ do not depend on $f$ but the value of $i$ is exactly given by $\max\{s:f(a)_{[0,s]}= b_{[0,s]}\}$, the answer to the corresponding type 1 query of $(a,b)$ to the word oracle. To be precise, since $b$ may have a length shorter than $i$, the query $\qu_1^w(a,b')$ where $b'_s=b_s$ for all $s\in [1,|b|]$ and $b_s'=0$ otherwise,  gives the desired information.
    \item For $a,a' \in L$, the distance between $a$ and $a'$ is given by $2(1+(k-i))$ for $i$ the answer of a Type 2 $\qu_2^w(a,a')$ to the word oracle. 
\end{itemize}

This shows that we reduce an algorithm to reconstruct the labelling of the leaves from $q$ distance queries to an algorithm that reconstructs functions $f: L \to [\Delta]^k$ from $q$ queries to the word oracle. By \cref{thm:lb_coord}, since $k \geq 50(c \ln c +2)$, we need at least $\frac1{11}\Delta \n k$ queries. 
\end{proof}

We are now ready to deduce the main result of this section.

\lowerboundnlogn*

\begin{proof}
    Let $\Delta,n\geq 2$ be integers. We write $n=2c\Delta^k$ for $c\in [1,\Delta)$ and $k$ an integer. (When $n/2\geq 1$, there is a unique pair $(c,k)\in [1,\Delta)\times \mathbb{Z}_{\geq 0}$ with $n/2=c\Delta^k$.) 
    
    Suppose that $k\geq 50(c\ln c+3)$. In particular, $\Delta \geq 2$ implies $k \geq \lfloor \log_\Delta n - 1\rfloor$.
    The tree $T=T_{\lfloor c \rfloor,\Delta,k}$ considered in \cref{lem:t_leaf_label} has maximum degree $\Delta+1$,  $\n =\lfloor c \rfloor \Delta^{k}$ leaves and $n' = \sum_{i=0}^{k} \Delta^i +\lfloor c \rfloor \Delta^{k}$ vertices, where
    \[
    n/4\leq N\leq  n' \leq 2 {c} \Delta^k = n.
    \]
    For $\Delta\geq 2$ and $c \geq 1$, if $n \geq 2\Delta^{50(c \ln c+3)}$ then $\n \geq \frac{n}4 \geq \Delta^{50(c \ln c +2)}$. So we may apply \cref{lem:t_leaf_label} and find that at least 
    \[
        \frac1{11} \Delta \n \lfloor \log_{\Delta} \n -1 \rfloor \geq \frac1{44} \Delta n ( \log_{\Delta} n-4)
    \]
    queries are required.  
    As $\log_\Delta n\geq 150$, we find that this is at least  $\frac1{50}\Delta n \log_{\Delta} n$.
\end{proof}

\begin{corollary}
Any randomised algorithm requires at least $\frac1{50} \Delta n\log_\Delta n$ distance queries to reconstruct $n$-vertex trees of maximum degree $\Delta+1\geq 3$ if $n\geq 2\Delta^{50 (\Delta \ln \Delta+3)}$.
\end{corollary}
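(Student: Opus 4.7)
The plan is to derive the corollary as a direct instantiation of \cref{thm:lowerboundnlogn}, with the parameter $c$ appearing in the theorem eliminated by absorbing its contribution into the hypothesis on $n$.

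First, I would observe that for any integer $n\geq 2$ there is a unique way to write $n = 2c\Delta^k$ with $c \in [1,\Delta)$ and $k \in \mathbb{Z}_{\geq 0}$; this decomposition is already used in the theorem statement. The theorem then yields the claimed $\frac{1}{50}\Delta n \log_\Delta n$ lower bound as soon as $k \geq 50(c\ln c + 3)$ holds. Since $x\mapsto x\ln x$ is non-decreasing on $[1,\infty)$ and $c < \Delta$, it is enough to establish the stronger, $c$-free inequality $k \geq 50(\Delta\ln\Delta + 3)$, and plug this into the theorem.

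To verify this condition, I would combine the canonical factorisation $n = 2c\Delta^k < 2\Delta^{k+1}$ with the hypothesis $n \geq 2\Delta^{50(\Delta\ln\Delta + 3)}$ to obtain $\Delta^{k+1} > \Delta^{50(\Delta\ln\Delta + 3)}$, and therefore $k+1 > 50(\Delta\ln\Delta + 3)$. Using that $k$ is an integer (and, if $50(\Delta\ln\Delta+3)$ fails to be an integer, a negligible rounding adjustment) delivers the required lower bound on $k$. Once this is in hand, \cref{thm:lowerboundnlogn} applies verbatim to trees of maximum degree $\Delta+1$ on $n$ vertices and yields the desired $\frac{1}{50}\Delta n\log_\Delta n$ lower bound on the expected number of distance queries.

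No substantive obstacle arises here: the argument is a routine repackaging of the main theorem. The only point requiring real attention is the monotonicity step that allows us to trade the awkward $c$-dependence in the theorem for the cleaner $\Delta$-dependence that appears in the corollary's hypothesis, together with the minor integer bookkeeping needed to extract $k \geq 50(\Delta\ln\Delta+3)$ from an assumption on $n$.
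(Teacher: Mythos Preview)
Your proposal is correct and matches the paper's own treatment: the corollary is stated immediately after the proof of \cref{thm:lowerboundnlogn} without any separate argument, and the intended derivation is exactly the one you give---use the monotonicity of $x\mapsto x\ln x$ on $[1,\infty)$ together with $c<\Delta$ to replace the condition $k\ge 50(c\ln c+3)$ by the uniform condition encoded in the hypothesis $n\ge 2\Delta^{50(\Delta\ln\Delta+3)}$. Your remark about the integer bookkeeping is the only point of care, and you flag it appropriately.
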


\subsection{Randomised lower bounds for related models} 
\label{subsec:phylogentic}
We next show that our result implies various other new randomised lower bounds. Although we state these results with a weaker assumption on $n$ for readability reasons, we remark that our more precise set-up (allowing $\Delta$ to be a small polynomial in $n$ for specific values of $n$) also applies here. 

\paragraph{Betweenness queries}
A \emph{betweenness query} answers for three vertices $(u,v,w)$ whether $v$ lies on a shortest path between $u$ and $w$. 
Using three distance queries to $(u,w)$, $(u,v)$ and $(v,w)$, you can determine whether $v$ lies on a shortest path between $u$ and $w$, so the betweenness oracle is weaker (up to multiplicative constants).
It has been shown in \cite{abrahamsenBodwinRotenbergStockel16} that randomised algorithms can obtain a similar query complexity for betweenness queries as was obtained for distance queries by \cite{KannanMZ14}. Moreover, a randomised algorithm for $4$-chordal graphs has been given that uses a quasi-linear number of queries to a betweenness oracle \cite{RONG20221}. A deterministic algorithm using $\tilde{O}(\Delta n^{3/2})$ betweenness queries has been given for trees, as well as a $\Omega(\Delta n)$ lower bound \cite{TreesSeparatorQueriesJagadishSen}. 
Our randomised lower bound from Theorem \ref{thm:lowerboundnlogn} immediately extends to this setting.
\begin{corollary}
Any randomised algorithm requires at least $\frac1{150} \Delta n\log_\Delta n$ betweenness queries to reconstruct $n$-vertex trees of maximum degree $\Delta+1\geq 3$ if $n\geq 2\Delta^{50 (\Delta \ln \Delta+3)}$.
\end{corollary}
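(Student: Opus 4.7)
The plan is to reduce directly from the randomised distance-query lower bound stated in the corollary immediately preceding this one (itself a consequence of Theorem~\ref{thm:lowerboundnlogn}). As highlighted in the paragraph introducing the corollary, a single betweenness query on a triple $(u,v,w)$ can be simulated by three distance queries, since $v$ lies on a shortest $u$--$w$ path if and only if $d(u,w)=d(u,v)+d(v,w)$. This simulation converts any algorithm with a betweenness oracle into an algorithm with a distance oracle at a multiplicative cost of~$3$.

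Concretely, I would argue as follows. Suppose $A_B$ is a randomised algorithm that reconstructs every $n$-vertex tree of maximum degree $\Delta+1$ (for $n\geq 2\Delta^{50(\Delta\ln\Delta+3)}$) using at most $b$ betweenness queries in expectation. Define a randomised distance algorithm $A_D$ which internally runs $A_B$, and every time $A_B$ would issue a betweenness query on $(u,v,w)$, $A_D$ instead performs the three distance queries $\qu(u,v)$, $\qu(u,w)$, $\qu(v,w)$, computes the associated betweenness answer, and feeds it back to $A_B$. The simulation is faithful because the three distance answers completely determine the betweenness answer, so $A_D$ reconstructs the same tree as $A_B$. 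By linearity of expectation, the expected number of distance queries used by $A_D$ is exactly $3b$.

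Applying the preceding corollary to $A_D$ under the hypothesis $n\geq 2\Delta^{50(\Delta\ln\Delta+3)}$ yields
\[
3b \;\geq\; \tfrac{1}{50}\,\Delta\, n\,\log_\Delta n,
\]
so that $b\geq \tfrac{1}{150}\,\Delta\,n\,\log_\Delta n$, as required. There is essentially no obstacle here beyond this one-line simulation: the whole content of the corollary is that the factor $3$ introduced by the simulation combines with the $\tfrac{1}{50}$ from Theorem~\ref{thm:lowerboundnlogn} to produce the claimed $\tfrac{1}{150}$, and all hypotheses on $n$ and $\Delta$ are inherited unchanged from the distance-query version.
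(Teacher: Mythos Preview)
Your proof is correct and follows exactly the approach the paper intends: the paper notes that one betweenness query can be simulated by three distance queries and then states that the lower bound from Theorem~\ref{thm:lowerboundnlogn} ``immediately extends to this setting,'' which is precisely the $3b\geq \tfrac{1}{50}\Delta n\log_\Delta n$ argument you spell out.
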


\paragraph{Path and comparison queries}
Given two nodes $i,j$ in a directed tree, a \textit{path query} answers whether there exists a directed path from $i$ to $j$. 
Improving on work from \cite{Wang19}, it was shown in \cite{AfsharESA20} that any algorithm needs $\Omega(n\log n+ n\Delta)$ to reconstruct a directed tree on $n$ nodes of maximum degree $\Delta$.
When we consider a directed rooted tree in which all edges are directed from parent to child, then path queries are the same as \emph{ancestor queries}: given $u,v$ in a rooted tree, is $u$ an ancestor of $v$? 
We apply this to the tree $T_{c,\Delta,h}$ from \cref{lem:t_leaf_label} for which the labels of all internal vertices are fixed but the labels of the leaves are unknown. Path queries $(u,v)$ only give new information if $v$ is a leaf and $u$ is an internal vertex. But this is weaker than distance queries, since we can obtain the same information by asking the distance between $u$ and $v$. This means that we can redo the calculation from the proof of \cref{thm:lowerboundnlogn} (applying \cref{lem:t_leaf_label})  to lift the lower bound to path queries.
\begin{corollary}
Any randomised algorithm requires at least $\frac1{50} \Delta n\log_\Delta n$ path queries to reconstruct $n$-vertex directed trees of maximum degree $\Delta+1\geq 3$ if $n\geq 2\Delta^{50(\Delta \ln \Delta+3)}$. 
\end{corollary}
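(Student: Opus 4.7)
The plan is to recycle the instance used in the proof of Theorem~\ref{thm:lowerboundnlogn}, namely the tree $T_{\lfloor c\rfloor,\Delta,k}$ from Lemma~\ref{lem:t_leaf_label} in which the labels of all internal vertices are fixed and only the leaf labelling is unknown. Orient every edge from parent to child, and view reconstruction of the unknown bijection leaf $\mapsto$ leaf-label as reconstruction of the directed tree. The goal is to show that any path-query algorithm in this directed setting can be simulated by a distance-query algorithm on the underlying undirected tree with at most the same number of queries, after which the randomised lower bound transfers verbatim.

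The main step is a case analysis on a path query $(u,v)$, asking whether there is a directed path from $u$ to $v$, on the rooted tree of depth $k+1$. If both $u$ and $v$ are internal, the skeleton is fixed so the answer is known in advance and no query needs to be spent. If $u$ is a leaf, then $u$ has no descendants and the answer is automatically \textbf{NO} (unless $v=u$). The only informative case is $u$ internal at some depth $d_u$ and $v$ a leaf at depth $k+1$; then $u$ is an ancestor of $v$ if and only if $d_G(u,v) = (k+1)-d_u$, which is determined by a single call to the distance oracle. Hence any algorithm using $q$ path queries on this directed instance can be converted on the fly into a distance-query algorithm using at most $q$ queries that reconstructs the leaf labelling of $T_{\lfloor c\rfloor,\Delta,k}$.

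With the reduction in place, I would invoke Lemma~\ref{lem:t_leaf_label} exactly as in the proof of Theorem~\ref{thm:lowerboundnlogn}: writing $n=2c\Delta^k$ with $c\in[1,\Delta)$ and $k\ge 50(c\ln c+3)$, the tree $T_{\lfloor c\rfloor,\Delta,k}$ has at most $n$ vertices, maximum degree $\Delta+1$, and $\n=\lfloor c\rfloor\Delta^k\ge n/4$ leaves, and requires at least $\tfrac{1}{11}\Delta \n k$ queries in expectation (against balanced inputs) to reconstruct the leaf labelling. The same arithmetic that produces $\tfrac{1}{50}\Delta n\log_\Delta n$ at the end of the proof of Theorem~\ref{thm:lowerboundnlogn} (using $\log_\Delta n\ge 150$) yields the claimed bound for path queries. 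I do not anticipate a real obstacle: the content lies entirely in verifying that path queries are information-theoretically dominated by distance queries on this family, which is immediate because the internal structure of $T_{\lfloor c\rfloor,\Delta,k}$ is public and the only unknowns are the parents of the leaves.
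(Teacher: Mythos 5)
Your argument is correct and is essentially the same as the paper's: the paper observes that on the instance $T_{\lfloor c\rfloor,\Delta,k}$ with fixed internal labels, a path query is only informative when $u$ is internal and $v$ is a leaf, and that this information is a function of $d(u,v)$, so path-query algorithms can be simulated by distance-query algorithms and Lemma~\ref{lem:t_leaf_label} together with the arithmetic of Theorem~\ref{thm:lowerboundnlogn} transfers. Your case analysis simply spells out the paper's one-line observation that ``path queries $(u,v)$ only give new information if $v$ is a leaf and $u$ is an internal vertex\ldots\ we can obtain the same information by asking the distance between $u$ and $v$.''
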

A randomised algorithm using $O(n\log n)$ path queries on bounded-degree $n$-vertex trees has been given in \cite{AfsharESA20} but their dependency on $\Delta$ does not seem to match our lower bound. 
We remark that besides query complexity, works on path queries such as \cite{AfsharESA20,Afshar22LATIN,Wang19} also studied the round complexity (i.e. the number of round needs when queries are performed in parallel).  

The same ideas applies to lift our lower bound to one for reconstructing tree posets  $(T,>)$ from \emph{comparison queries}, which answer for given vertices $u,v$ of the tree whether $u<v,v<u$ or $u||v$.
\begin{corollary}
Any randomised algorithm requires at least $\frac1{50} \Delta n\log_\Delta n$ comparison queries to reconstruct $n$-vertex tree posets of maximum degree $\Delta+1\geq 3$ if $n\geq 2\Delta^{50 (\Delta \ln \Delta+3)}$. 
\end{corollary}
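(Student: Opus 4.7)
The plan is to mirror the previous two corollaries in this subsection and reduce to the leaf-label reconstruction lower bound of \cref{lem:t_leaf_label}. Consider the tree $T_{c,\Delta,k}$ from that lemma, rooted at its canonical root, viewed as a tree poset under the ancestor relation ($u < v$ iff $u$ is a strict ancestor of $v$ in the rooted tree). The internal vertices have fixed, known labels, while the $\n = c\Delta^k$ leaves carry the unknown labels; reconstructing the tree poset is equivalent to recovering which label sits at which leaf position.

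Next, I would show that every comparison query $(u,v)$ can be simulated by at most one distance query in the same tree. There are three cases: (i) if $u$ and $v$ are both internal, the answer is fixed by the known internal structure and reveals nothing; (ii) if both are leaves, the answer is always "incomparable" in a rooted tree poset and again reveals nothing; (iii) if $u$ is internal and $v$ is a leaf (or vice versa), then $u < v$ iff $u$ lies on the root-to-$v$ path, which happens iff $d(u,v) = (k+1) - \mathrm{depth}(u)$. Since $\mathrm{depth}(u)$ is known, a single query $\qu(u,v)$ to the distance oracle fully determines the comparison answer. Consequently, any randomised algorithm reconstructing the tree poset using $q$ comparison queries in expectation yields a randomised algorithm that recovers the leaf labelling of $T_{c,\Delta,k}$ using at most $q$ distance queries in expectation.

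Finally, I would invoke \cref{lem:t_leaf_label} and repeat verbatim the parameter calculation at the end of the proof of \cref{thm:lowerboundnlogn}: writing $n = 2c\Delta^k$ with $c \in [1,\Delta)$ and $k \geq 50(c\ln c + 3)$, the tree has at most $n$ vertices and maximum degree $\Delta+1$, and the lower bound $\tfrac{1}{11}\Delta \n k$ of \cref{lem:t_leaf_label} translates into $\tfrac{1}{50}\Delta n \log_\Delta n$ under the hypothesis $n \geq 2\Delta^{50(\Delta\ln\Delta+3)}$. The main thing to double-check, and the only real content beyond bookkeeping, is the case analysis in the previous paragraph: one must verify that no comparison query between, say, a leaf $v$ and a deep internal vertex can encode subtler information about $v$'s position than whether that particular vertex is an ancestor of $v$; but because the only non-trivial comparison queries are ancestor checks in a tree of known internal geometry, this is immediate from the depth identity above.
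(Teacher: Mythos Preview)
Your proposal is correct and follows exactly the approach the paper intends: the paper itself gives no separate proof for this corollary, merely stating that ``the same ideas apply'' as for path queries, and your case analysis (both internal: trivial; both leaves: always incomparable; mixed: simulate the ancestor check by one distance query via the depth identity) together with the recycling of the parameter calculation from the proof of \cref{thm:lowerboundnlogn} is precisely what that remark unpacks to.
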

This improves on the lower bound of $\Omega(\Delta n + n \log n)$ from \cite{Roychoudhury23} and matches (up to a $(C\log\Delta)$ factor) the query complexity of their randomised algorithm. 

\paragraph{Membership queries for reconstructing partitions}
The $(n,k)$-partition problem was introduced by King, Zhang and Zhou~\cite{KingZhangZhou}.  Given $n$ elements which are partitioned into $k$ equal-sized classes, the partition needs to be determined via queries of the form `Are elements $a$ and $b$ in the same class?'.  They used the adversary method to prove $\Omega(nk)$ queries are needed by any deterministic algorithm. 
Liu and Mukherjee~\cite{Liu22} studied this problem phrased as learning the components of a graph via membership queries (which answer whether given vertices lie in the same component or not) and provide an exact deterministic lower bound of $(k-1)n-\binom{k}2$ for deterministic algorithms. 
It indeed seems natural that randomised algorithms need $\alpha kn$ queries for some constant $\alpha$ in this setting. Nonetheless the best lower bound for randomised algorithm seems to be the information-theoretic lower bound of $\Omega(n\log k)$. Our next result remedies this gap in the literature.
\begin{corollary}
Let $\varepsilon>0$.
Any randomised algorithm requires at least $\frac1{11} nk$ membership queries to solve the $(n,k)$-partition problem if $n\geq k^{1+\varepsilon}$ is sufficiently large.
\end{corollary}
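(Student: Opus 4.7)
The plan is to apply the coordinate oracle framework of Section~\ref{sec:coordoracle} with one coordinate ($K=1$) and alphabet size $\Delta=k$, under which the Type~2 query $\qu_2^c(a,a',1)$ is literally ``does $f(a)=f(a')$?'', i.e.\ a membership query for the partition induced by $f:[n]\to[k]$. The lower bound will follow from (i)~extending Lemma~\ref{lem:concentration_coord} to uniformly random \emph{balanced} $f$ in the style of the proof of Theorem~\ref{thm:lb_coord}, and (ii)~a reduction from the partition problem to labelled function reconstruction that loses only $O(k^2)$ \textbf{NO} answers.

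For (i), I would specialise Lemma~\ref{lem:concentration_coord} to $\Delta=k$, $K=1$: any deterministic coordinate oracle algorithm reconstructing a uniformly random $f:[n]\to[k]$ uses at least $\tfrac{1}{10}kn$ \textbf{NO} answers except with probability at most $e^{-n/50}$. To condition on $f$ being balanced, I need $\Pr(f\text{ balanced})\gg e^{-n/50}$. The hypothesis $K\geq 50(c\ln c+2)$ from Theorem~\ref{thm:lb_coord} does not hold in this regime, so I would re-establish this estimate directly by Stirling: a short computation yields
\[
\ln \Pr(f\text{ balanced}) \;=\; -\tfrac{k}{2}\ln(n/k) + O(k),
\]
which under $n\geq k^{1+\varepsilon}$ equals $-\Theta(\varepsilon k\ln k)$, much larger than $-n/50=-k^{1+\varepsilon}/50$ once $n$ is sufficiently large (using $k^{\varepsilon}\gg \ln k$). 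Averaging over balanced inputs as in the proof of Theorem~\ref{thm:lb_coord} then gives that any deterministic coordinate oracle algorithm uses at least $(1-o(1))\tfrac{1}{10}kn$ \textbf{NO} answers in expectation on a uniformly random balanced $f$.

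For (ii), I would turn any deterministic partition algorithm $M$ into a deterministic coordinate oracle algorithm $M'$ reconstructing the labelled $f$ as follows: simulate each membership query of $M$ by the corresponding $\qu_2^c(\cdot,\cdot,1)$; then, once $M$ has recovered the (unlabelled) partition, determine the label $f(r)$ of a representative $r$ of each class by querying $\qu_1^c(r,j,1)$ for successive labels $j\in[k]$ not yet assigned to another class, until the first \textbf{YES}. Since $(f(r_1),\ldots,f(r_k))$ is a permutation of $[k]$, this labelling phase contributes at most $\binom{k}{2}$ Type~1 \textbf{NO} answers in total, while the Type~2 \textbf{NO} answers coincide with $M$'s \textbf{NO} membership queries. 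Combining (i) and (ii) and applying Yao's minimax principle (Lemma~\ref{lem:yao}) then yields that any randomised membership-query algorithm uses in expectation at least $(1-o(1))\tfrac{1}{10}kn - \binom{k}{2}\geq \tfrac{1}{11}kn$ queries on the worst-case balanced partition, where the final inequality uses $n\geq k^{1+\varepsilon}$ to conclude $k^2=o(kn)$ and $n$ sufficiently large to absorb the $(1-o(1))$ factor.

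The main obstacle is step~(i): the lifting to balanced inputs in the proof of Theorem~\ref{thm:lb_coord} genuinely uses the many-coordinates assumption, so for $K=1$ it must be redone via direct Stirling asymptotics under the weaker hypothesis $n\geq k^{1+\varepsilon}$. The margin between the constant $\tfrac{1}{10}$ coming from Lemma~\ref{lem:concentration_coord} and the target $\tfrac{1}{11}$ of the corollary is precisely what allows us to absorb both the $o(1)$ loss from conditioning on balancedness and the $O(k^2)$ overhead from label recovery.
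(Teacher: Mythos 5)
Your proposal follows essentially the same route as the paper: specialize Lemma~\ref{lem:concentration_coord} to one coordinate with alphabet $[k]$, show via direct Stirling estimates that $\Pr(f\text{ balanced})\gg e^{-n/50}$ under $n\geq k^{1+\varepsilon}$ (the paper does precisely this, computing the ratio $\Delta^n e^{-n/50}\big/\binom{n}{n/\Delta,\dots,n/\Delta}\to 0$), account for an $O(k^2)$-query label-recovery overhead, and invoke Yao. The only cosmetic differences are that the paper bounds the label-recovery cost by $\Delta^2$ rather than $\binom{k}{2}$ and phrases the Stirling bound as a ratio rather than as $\ln\Pr(f\text{ balanced})$.
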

\begin{proof}
Since we plan to apply \cref{lem:concentration_coord}, we will write $\Delta=k$.

First note that we can see the $(n,\Delta)$-partition problem using \emph{membership} queries as reconstructing a balanced function using only Type 2 queries to the coordinate oracle. Formally, if $f : [n] \to [\Delta]$ is the function which associates an element $a\in [n]$ to the index $i\in [\Delta]$ of the part that contains $a$ (out of the $\Delta$ parts in the partition), then a membership query between $a,a'\in [n]$ is exactly equivalent to the coordinate query $\qu_2^{c}(a,a')$ applied the function $f$. Once we reconstructed the partition, we can retrieve the index of  each parts using $\Delta^2 = o(n\Delta)$ queries of Type 1 to the coordinate oracle. Therefore it suffices to show that at least $\frac{1}{11}\Delta n $ queries are needed in expectation to reconstruct a balanced function using the coordinate oracle.

Applying \cref{lem:concentration_coord} with $k=1$, we find that when $f$ is sampled uniformly at random (among all functions $g:[n]\to [\Delta]$), the probability that a given randomised algorithm uses less than $\frac1{10}n\Delta$ queries is at most $e^{-\frac{1}{50} n}$. In particular, the number of balanced functions reconstructed in less than $\frac1{10}n\Delta$ queries is upper bounded by $\Delta^n e^{-\frac{1}{50} n}$. We compare this number to the total number ${n \choose n/\Delta,\ldots,n/\Delta} = \frac{n!}{(n/\Delta)!^\Delta}$ of balanced functions:
\begin{align*}
    \frac{\Delta^n e^{-\frac{1}{50} n}}{\frac{n!}{(n/\Delta)!^\Delta}} &\leq 
\Delta^n    e^{-\frac1{50}n} \frac{(2\pi n/\Delta)^{\Delta/2}(n/(e\Delta)^{n}}{\sqrt{2\pi n}(n/e)^n}
e^{\Delta/(12n/\Delta)}\\
&=\frac1{\sqrt{2\pi n}}(2\pi n/\Delta)^{\Delta/2} \exp(\Delta^2/(12n)-n/50).
\end{align*}
Here we used that for all $n\geq 1$, 
\[
\sqrt{2\pi n} (n/e)^n <n!< \sqrt{2\pi n} (n/e)^n e^{1/(12n)}.
\]
Since $\Delta\leq n^{1/(1+\epsilon)}$ for some $\epsilon>0$, the fraction tends to $0$, so in particular becomes smaller than $\frac1{100}$ when $n$ is sufficiently large (depending on $\epsilon$).
This implies that the expected number of queries to reconstruct a $\Delta$-balanced function  is at least $\frac{99}{100} \frac1{10} \Delta n \geq \frac{1}{11} n \Delta$. By the discussion at the start, we find the same lower bound for the $(n,\Delta)$-partition problem.
\end{proof}
The same lower bound holds when queries of the form `Is element $a$ in class $i$?' are also allowed. We expect that our methods can be adapted to handle parts of different sizes and that our constant $\frac1{11}$ can be easily improved.

Randomised algorithms have been studied in a similar setting by
Lutz, De Panafieu, Scott and Stein \cite{scottNIPS} under the name \emph{active clustering}. They provide the optimal average query complexity when the partition is chosen uniformly at random among all partitions. They also study the setting in which a partition of $n$ items into $k$ parts is chosen uniformly at random and allow queries of the form `Are items $i$ and $j$ in the same part?'. However there is a key difference: the set of answers the algorithm receives, needs to distinguish the partition from any other partition (including those with a larger number of parts). In this setting, the following algorithm is optimal. Order the items $1,\dots,n$. For $i=1,\dots,n$, query item $i$ to items $j=1,\dots,i$ in turn if the answer to the query `Are items $i$ and $j$ in the same part?' is not yet known. It follows from \cite[Lemma 9]{scottNIPS} that this algorithm is has the lowest possible expected number of queries. The expected number of queries used is at most 
\[
(1+2+3+\dots +k)\frac1k n=\frac{k+1}2n.
\]
In particular, for $i=\Omega(\log n)$ queries, with high probability there are items in $k$ different parts among the first $i-1$ items. Since the number of parts is not `known', the algorithm will use $k$ queries for item $i$ if is in the `last part' and so the complexity is $(\frac{k+1}2+o(1))n$ as $n\to \infty$. The same algorithm would use $(\tfrac{k+1}2-\tfrac1k+o(1)) n$ queries when the number of parts may be assumed to be at most $k$, in which case queries `to the last part' are never needed. So the assumption on whether the number of parts is known, changes the query complexity. Together with the parts `being known to be exactly balanced', this introduces additional dependencies in the $(n,k)$-partition problem that our analysis had to deal with.

\paragraph{Phylogenetic reconstruction}
This setting comes from biology. Reconstructing a phylogenetic tree has been modelled via what we call \emph{leaf-distance} queries (similarity of DNA) between leaves of the input tree \cite{hein1989optimal,waterman1977additive,KingZhangZhou}. Although very similar, the query complexities of the phylogenetic model and the distance query model on trees are not directly related. In the phylogenetic model, the set of leaves is already known and the leaf-distance queries are only possible between leaves. Moreover, we consider a phylogenetic tree to be reconstructed once we know all the pairwise distances between the leaves. For example, if the input tree is a path on $n$ vertices, then in the phylogenetic setting we receive only two leaves and are finished once we query their distance, whereas in the distance reconstruction setting it takes $\Omega(n)$ queries to determine the exact edge set.

Improving on various previous works~\cite{brodal01,Kannan96,KaoLingasOstlin99}, King, Zhang, and Zhou~\cite[Theorem 3.2]{KingZhangZhou} showed that any deterministic algorithm reconstructing phylogenetic trees of maximum degree $\Delta$ with $\n$ leaves needs at least $\Omega(\Delta \n \log_\Delta \n)$ leaf-distance queries. Deterministic algorithms achieving this complexity are also known~\cite{hein1989optimal}.

For randomised algorithms, the previous best lower bound was the information-theoretic lower bound of $\Omega(\n\log \n/\log\log \n)$. 
We provide the following randomised lower bound from~\cref{lem:t_leaf_label} which is tight up to a multiplicative constant.
\begin{restatable}{theorem}{phylogenetic}
\label{thm:phylogenetic}
Let $\Delta\geq 2,~c\leq \Delta-1$ and $k\geq 50(c\ln c+2)$ be positive integers. Let $\n=c\Delta^k$.
Any randomised algorithm reconstructing phylogenetic trees of maximum degree $\Delta+1$ with $\n$ leaves needs at least $\frac{1}{20}\Delta \n \log_\Delta \n$ leaf-distance queries in expectation.    
\end{restatable}
\begin{proof}   
Let $T=T_{c,\Delta,k}$ be the tree considered in \cref{lem:t_leaf_label} with $\n$ leaves. 

Suppose towards a contradiction that we could obtain the pairwise distances between the leaves of this tree in $\frac1{20}\Delta \n\log_\Delta \n$ leaf-distance queries in expectation.
We show that, from this, we can recover the labels of the leaves of $T$ using only $\Delta^2 \n \leq \frac{1}{30} \Delta \n \log_\Delta \n$ additional distance queries, contradicting \cref{lem:t_leaf_label} since $\frac1{20}+\frac1{50}\leq \frac1{11}$ and $\log_\Delta \n \geq k \geq 50$. 

We proceed by induction on $k$, the depth of the tree $T$.
When $k=0$, $T_{c,\Delta,0}$ is a star with $c$ leaves. There is nothing to prove, as the parent of each leaves in known to be the root. Suppose $k \geq 1$ and that the claim has already been shown for smaller values of $k$.
We define an equivalence relation on the set of leaves: for $u_1,u_2 \in L$, $u_1 \sim u_2$ if and only if $d(u_1,u_2) < 2k$. This is an equivalence relation with $\Delta$ equivalence classes, as $d(u_1,u_2) < 2k$ if and only if $u_1$ and $u_2$ have a child of the root as common ancestor. 

Let $u_1,u_2,\ldots,u_\Delta$ be arbitrary representatives of each of the $\Delta$ equivalence classes. (Note that we can select these since we already know the distances between the leaves.)
Let $r$ denote the root of $T$. 
We ask $\qu(u_i,N(r))$ for all $i \in [\Delta]$. From this we can deduce the common ancestor among the children of the root for each of the classes. It is the unique neighbour of $r$ lying on a shortest path from $u_i$ to $r$. Let  $V_i$ denote the set of all the leaves that have the $i^\text{th}$ neighbour of $r$ as common ancestor. We also define $T_i$ to be the subtree rooted in the $i^\text{th}$ neighbour of $r$. We remark that it is now sufficient to solve $\Delta$ subproblems of reconstructing $T_i$ knowing each $V_i$ leaf matrix. By the induction hypothesis, each subproblem is solvable in $|V(T_i)| \Delta^2 = \frac{\n-1}{\Delta}\Delta^2 = (\n-1)\Delta$ queries. Therefore, in total this algorithm uses $(\n-1)\Delta^2 + \Delta^2 = \Delta^2 \n$ distance queries.
\end{proof}

\section{Open problems}
\label{sec:concl}
We presented new algorithms for reconstructing classes of graphs with bounded maximum degree from a quasi-linear number of distance queries and gave a new randomised lower bound of $\const \Delta n \log_{\Delta} n$ for $n$-vertex trees of maximum degree $\Delta$. This lower bound is now also the best lower bound for the class of bounded degree graphs, while  the best-known randomised algorithm uses $\widetilde{O}_\Delta(n^{3/2})$ queries \cite{mathieu2013graph}. 
The main open question is to close the gap between our lower bound and this upper bound. In particular, it would be interesting to see if the quasi-linear query complexity achieved for various classes of graphs can be extended to all graphs of bounded maximum degree. 
\begin{problem}
\label{problem1}
Does there exist a randomised algorithm that reconstructs an $n$-vertex graph of maximum degree $\Delta$ using $\Tilde{O}_{\Delta}(n)$ distance queries in expectation?
\end{problem}

From a more practical viewpoint, studying the query complexity of reconstructing scale-free networks is of high importance as it is the class of graphs that best describes real-world networks like the Internet. Graphs in this class have vertices of large degree, therefore recent theoretical works (including this one), do not directly apply. In particular, not even an $o(n^2)$ algorithm is known in this setting.

\begin{problem}
How many distance queries are needed for reconstructing scale-free networks?
\end{problem}

We showed in this paper that \textit{deterministic} algorithms with good query complexity exist for specific classes of graphs. One of the classes of graphs that does not fit in the scope of this paper but is known to have an efficient randomised algorithm is the class of bounded degree outerplanar graphs \cite{KannanMZ14}. Note that outerplanar graphs also have a nice separator structure. For example, there is always a $\frac12$-balanced separator which induces a path (see e.g. \cite[Proposition 6]{cyril}).

\begin{problem}
    Does there exist a deterministic algorithm that reconstructs an $n$-vertex outerplanar graphs of maximum degree $\Delta$ in $\Tilde{O}_{\Delta}(n)$ distance queries?
\end{problem}

We provided various randomised lower bounds in Section \ref{sec:lb} that are tight up to a multiplicative constant.
We expect that determining the exact constant for the optimal expected number of queries for reconstructing $n$-vertex trees of maximum degree $\Delta$ for the entire range of $(n,\Delta)$ may be complicated. 
We believe that for deterministic algorithms, when $n$ is large compared to $\Delta$ (and takes particularly nice forms, e.g. $1+\Delta+\dots+\Delta^k$ for some integer $k\geq 1$), determining the correct constant could be achievable (yet would require new ideas). 
In particular, our simple algorithm may be even optimal up to an additive constant. We set the following challenge towards this.
\begin{problem}
Is there a constant $c>\frac12$ such that for all $\Delta\geq 3$, there are infinitely many values of $n$ for which any deterministic algorithm that reconstructs $n$-vertex trees of maximum degree $\Delta$ needs at least $c\Delta n\log_\Delta n$ queries?  
\end{problem}
We showed that the randomised and deterministic query complexity have the same dependence on $n$ and $\Delta$ and in that sense randomness does not help much. If the answer to the question above is positive, it shows that  randomness does at least make some difference.

A lower bound for trees immediately implies a lower bound for any class containing  trees (such as $k$-chordal graphs), but a better lower bound could hold for bounded degree graphs. 
In our algorithm for $k$-chordal graphs, we did not try to optimise the dependency on the maximum degree $\Delta$ and $k$ and the dependency on $k$ can probably be improved upon. 
If one believes the answer to \cref{problem1} to be positive, then the dependency on $k$ should be at most poly-logarithmic.
It would be interesting to see if some dependency on $k$ is needed. To ensure the lower bound needs to exploit cycles due to our $\Delta n \log_\Delta n$ algorithm for trees, we pose the following problem.
\begin{problem}
    Is it true that for some fixed values of $k$ and $\Delta$ and for all $n$ sufficiently large, any algorithm reconstructing $k$-chordal graphs on $n$ vertices of maximum degree $\Delta$ requires at least $10^6 \Delta n\log_\Delta n$ queries?
\end{problem}
Finally, we believe the problems of reconstructing functions from the coordinate or from the word oracle are of independent interest. There are various variations that can be considered  to which our methods may extend. For example, it is also natural to consider (bijective) functions $f:A\to B$ where both $A$ and  $B$ are the same product of sets of different sizes (e.g. $A=B=[a_1]\times [a_2]\times \dots \times [a_k]$).

\paragraph{Acknowledgements} We would like to thank Claire Mathieu for helpful suggestions and Guillaume Chapuy for helpful discussions regarding randomised lower bounds for the $(n,k)$-partition problem which inspired our randomised lower bound proofs. We are also grateful to Jatin Yadav for pointing us to an issue we did not address in an earlier version and we thank an anonymous referee for pointing out that our upper bound in Theorem \ref{thm:tree_rec} is different when $\Delta=3$.

\newcommand{\etalchar}[1]{$^{#1}$}

\end{document}